\documentclass[11pt,a4paper]{article}

\newcommand*{\images}{images}

\usepackage{import}
\usepackage{archive}
\usepackage[T1]{fontenc}

\begin{document}

\author[1]{Katherine Molinet}
\author[1]{Aris Filos-Ratsikas}

\affil[1]{University of Edinburgh}

\date{}

\title{A Theoretical Approach to Stablecoin Design via Price Windows\thanks{\protect This work was supported by Input Output (\href{https://www.iog.io/}{iog.io}) through their funding of the Edinburgh Blockchain Technology Lab (BTL) and the IO Research Hub (IORH). This work was also supported by the UK Engineering and Physical Sciences Research Council (EPSRC) grant EP/Y003624/1. The authors would also like to thank Aggelos Kiayias for useful discussions during the initial stages of this paper.}}

% Citing papers: ``\citet{klages2020stablecoins} did this'', or ``as it can be seen in \citep{klages2020stablecoins}'' etc.  
% Referencing sections: \cref{sec:introduction} and in the appendix: \cref{app:section1}.

\maketitle

\begin{abstract}
In this paper, we explore the short- and long-term stability of backed stablecoins offering constant mint and redeem prices to all agents. We refer to such designs as \textit{price window-based,} since the mint and redeem prices constrain the stablecoin's market equilibrium. We show that, without secondary stabilization mechanisms, price window designs cannot achieve both short- and long-term stability unless they are backed by already-stable reserves. In particular, the mechanism faces a tradeoff: either risk eventual reserve depletion through persistent arbitrage by a speculator, or widen the distance between mint and redeem prices enough to disincentivize arbitrage. In the latter case, however, the market price of the stablecoin inherits the volatility of its backing asset, with fluctuations that can be proportional to the backing asset’s own volatility.
%The abstract should briefly summarize the contents of the paper in 150--250 words. \kat{Don't forget this!}

% \keywords{stablecoins  \and price windows \and theoretical framework.}
\end{abstract}

\section{Introduction}

A stablecoin is a cryptocurrency whose price is ``pegged'' to some reference asset (typically the U.S. Dollar). Since their emergence in the mid-2010s, a wide range of designs and stabilization mechanisms have been developed. But despite strong incentives and many design attempts, no stablecoin has yet been shown to be both provably stable and fully decentralized. Designs backed by USD, for instance, have been shown to be stable when reserves are well-managed; but the management of off-chain assets requires trust -- the very thing cryptocurrencies were created to circumvent. By contrast, cryptocurrency-backed designs -- which can be managed by smart contracts instead of human custodians -- can avoid this issue, but have struggled with market price volatility. Algorithmic stablecoins seek to avoid a dependence on any external currency, instead controlling market price through regulation of supply and demand, much like a central bank. However, these designs have experienced some of the most catastrophic stablecoin failures to date (e.g., \vcite{p}{lms23}).

Two foundational questions in the field are therefore still open: Can a decentralized, stable coin even exist? And what is the full space of possible stablecoin designs? While our work in this paper represents only a very small part of the complete stablecoin picture, we believe that our approach -- in which we use simple mathematical models and minimal assumptions to distinguish between different families of stablecoin designs by the emergent properties of those systems -- could be built upon in future work. Our definitions and approaches were crafted with this in mind. 

\subsection{Our Contributions}
The scope of this paper is limited to what we call the ``price window model,'' in which the stablecoin mechanism mints and redeems stablecoins at fixed dollar prices, thereby constraining the market price equilibrium to lie between these bounds. This captures familiar USD-backed designs like Tether and USDC, as well as some crypto-backed designs like USDN and Cardano's Djed. Interestingly, these designs are by far the most common among stablecoins backed by USD, but seem to be less preferred among crypto-backed designs. Our work offers a possible explanation as to why.

In \vref{Section}{sec:model}, we suggest that what mathematically distinguishes these designs from other backed stablecoins is the information the mechanism can use to determine the mint and redeem prices it offers. While other designs (such as DAI) can use information about the particular transaction (e.g., the identity and history of the agent initiating the request), price window mechanisms treat all agents and stablecoins \textit{interchangeably}. This property leaves the mechanism vulnerable to a speculator who exploits fluctuations in the price of the backing asset (which we refer to throughout the paper as the ``backing coin'') to accumulate reserves.\footnote{By contrast, mechanisms like DAI with \textit{personalized} redeem prices can guard against this avenue of failure by blocking agents from redeeming stablecoins for more backing coins than they paid to mint -- but at the loss of the short-term stability afforded by price windows.} Informally, our main result in \vref{Section}{sec:general-results} shows that if the stablecoin mechanism is not allowed to use information about the specific transaction, then it cannot effectively guard against such attacks.

\begin{inftheorem}
    In the absence of secondary stabilization mechanisms, the long-term sustainability of a price window-based mechanism is only guaranteed when the size of the price window (that is, the difference between the mechanism's mint and redeem prices and, therefore, the rational price interval in the market) scales with the asymptotic volatility of the underlying asset.
\end{inftheorem}

This suggests a fundamental tradeoff between short-term and long-term stability in price window designs over unstable assets. Moreover, our result shows that the higher the volatility of the asset used to back the stablecoin, the higher the potential volatility of the stablecoin. This implies that -- at least, for designs which adopt a price window as the primary stabilization mechanism -- price stability seems to be largely an inherited property.

We expect \vreftwo{Sections}{sec:model}{sec:general-results} to be most relevant to theorists and researchers looking to define relevant stablecoin design taxonomies and understand emergent properties of different stablecoin designs; but it also has some practical relevance to system designers. For instance, our result provides a theoretical value for the size of transaction fees needed to prevent long-term depletion of a system's reserves. While this value might be overlarge for systems based on a highly volatile backing coin, it might be more reasonable to implement for mechanisms whose backing coin is close to stable.

\vref{Section}{sec:case-study} of this work is designed to answer some of the more practical questions raised by the first part of this work. For instance, in \vref{Section}{sec:general-results} we outline a theoretical mechanism by which some abstract speculator could deplete a system's reserves of backing coins, but it's unclear how seriously this risk should be taken. Our claim doesn't address any real-world concerns about whether a real speculator could successfully launch such an attack, and how powerful and well-resourced the speculator would need to be to deplete a system's reserves in a reasonable amount of time. To this end, we move from a theoretical black-box speculator to an explicit model of a speculator and compute a formula for the expected time it would take for the speculator to deplete a system's reserves, given that the backing coin prices are drawn independently at random from some known distribution. We explain how this formula relates the expected depletion time to the volatility of the backing coin, and how the same pattern we noticed in the first section -- that the more volatile the backing coin, the less sustainable the system -- becomes even more explicit here. We then relax the assumption of independent price draws and simulate the behavior of a speculator over historical Ether and Bitcoin price data, showing that a sufficiently wealthy speculator would have been able to drain the reserves of a stablecoin system backed by each of these currencies within three years.

\subsection{Related Work}

Understanding the source of stability in stablecoin design has been the focus of a growing literature since 2019, e.g., see \vcite{p}{mita2019stablecoin,lipton202011,lyons2023keeps,grobys2021stability,clements2021built,klages2020stablecoins}. The field of stablecoin literature is inherently interdisciplinary, with sources ranging from empirical and financial analyses \vcite{p}{du23,jk21} to theory \vcite{p}{cdkl+21,li2022money} to regulatory and legal reviews \vcite{p}{clements2021built,az21}. While each perspective offers an important layer to the stablecoin narrative, our approach in this paper is theoretical, and so our contribution is best compared against other theoretical papers. In general, the heterogeneity of stablecoins makes it difficult to propose theoretical claims which apply broadly across the design space, and many works respond to this challenge by either focusing on a single type of stablecoin design, or by proposing broader frameworks and taxonomies by which to better understand the options.

For these broader/taxonomic papers, on one end of the spectrum are papers like \vcite{p}{bkp19,moin2020sok,cd21,cds21}, which offer meaningful qualitative and intuitive observations but can lack mathematical formalism. The work of \vcite{t}{bkp19} is a paper of this type which has been of particular inspiration to our work, especially the authors' distinction between primary and secondary stabilization mechanisms.
%; although unlike us, they distinguish different backed stablecoins by whether their backing coin is on- or off-chain, while we distinguish according to the mechanism's protocols. 
Papers which take a more mathematical approach include \vcite{p}{klages2020stablecoins}, which classifies backed stablecoins according to the endo/exogeneity of the backing coins; \vcite{p}{zly21}, a fascinating paper which uses automata theory to develop a modeling and verification framework which generalizes several types of algorithmic stablecoins; and a 2024 paper by \vcite{t}{potter2024drives} which, of the taxonomic papers, seems closest to our own. In \vcite{p}{potter2024drives}, the authors generalize many known stablecoin designs by their redemption functions and then classify families of designs according to the economic conditions under which they can maintain the \$1 peg. However, unlike us, the authors assume that all value is measured in dollars, rather than backing coins. The only failure path for price-window based designs, according to this paper, is through undercollateralization if the price of the backing coin drops sufficiently -- something which can only occur in ``bad'' economic states. Meanwhile, our paper demonstrates how undercollateralization and reserve depletion can occur even in good economic states through a speculator who values backing coins more than dollars. This is a reasonable assumption to make for strong backing coins with high average returns, like Bitcoin or Ether.

With respect to limited-scope theoretical papers, our work differs in its focus on price window-based models over potentially volatile reserves. Since these designs are less commonly seen in practice, it is perhaps understandable that fewer papers have studied them as opposed to, for instance, fiat-backed stablecoins like Tether \vcite{p}{lyons2023keeps} or overcollateralized designs like DAI \vcite{p}{klages2021stability,klages2022while,b19,kv21}. Some notable papers include \vcite{p}{klages2021stability} and \vcite{p}{klages2022while}, which explore deleveraging spirals in stablecoins with DAI-like designs under the influence of an opportunistic speculator. \vcite{p}{b19} explores undercollateralization risks in BitShares, another overcollateralized, crypto-backed stablecoin. Although this paper explores a different setting from our work, the author's focus on understanding how the mechanism should choose its strategy to prevent some undesired behavior is similar to our own approach. \vcite{p}{dmv22} is another notable theoretical paper, but its setup explores systems in which redemption is not guaranteed -- instead, investors must sell their stablecoins through the competitive market. The platform defends the peg instead by contracting and expanding the supply of stablecoins. 

Several papers, like our own, note a link between the volatility of the backing coin and that of the stablecoin. For instance, \vcite{t}{kv21} show how unstable backing assets can distort a DAI-like system's mechanism for arbitrage -- in particular, the stablecoin tends to trade above (below) its peg when ETH is in a bad (good) state. \vcite{t}{potter2024drives} note the same behavior as well, though their motivation is wholly different from our own. While we argue that this ``inherited (in)stability'' occurs through arbitrage incentives, \vcite{t}{potter2024drives} argue that it instead arises from backing coin price discrepancies between agents' expectations and prices offered by the mechanism (due, e.g., to price oracle delays). The more volatile the backing coin, they argue, the more likely it is that these two values will be out-of-sync. In contrast to these papers, in our work we not only show the relationship between the volatility of the stablecoin and backing coin, but we also compute the size of the transaction fees needed to prevent eventual failure.

\section{The Model}\label{sec:model}
Let $T=\{1,2,\ldots\} \cong \mathbb{N}$ be a discrete and infinite sequence of \emph{timesteps}, which we will refer to as \emph{time}. In our setup, there are two types of coins: the \emph{tokens} and the \emph{backing coins}. Looking ahead, the former model the \emph{stablecoins}, and the latter model the (possibly volatile) assets that back the stablecoin system, e.g., Ether, BTC, or USD. The value (or \emph{price}) of both tokens and backing coins will be measured in terms of some \emph{fiat currency}; without loss of generality, we assume that this will be the U.S. dollar (\$). We will use $p_t$ to denote the dollar price of the backing coin at timestep $t \in T$ (provided, e.g., by an oracle), and assume that $p_t \geq 0$ for all $t \in T$ and that $p_t$ is exogenous to the system.

We define a \textit{token-generating mechanism} $\mathcal{M}$ with which some set of \textit{agents} interact by exchanging backing coins and tokens. The rules of this interaction are governed by the functions $\alpha$ and $\beta$, which capture, respectively, how the mechanism \textit{mints} and \textit{redeems} tokens, and which we describe below. Agents are also permitted to trade tokens and backing coins with \textit{each other} in a market. We let $d_t$ denote the equilibrium dollar market price of a token at time $t$. 

In this work (and in the theory of stablecoins more generally), we are interested in token-generating mechanisms which choose interaction functions $\alpha$ and $\beta$ with the aim of creating the conditions for a \textit{stable} token with market price $d_t = 1$.\footnote{This could be any positive constant; but according to convention and without loss of generality, we set it to 1.} We call such mechanisms ``stablecoin systems'' and the tokens they generate ``stablecoins,'' even if they are not always stable in practice. To clarify what we mean by stability, we introduce in this paper a notion which we refer to as ``weak $\varepsilon$-stability'':

\begin{definition}[$\varepsilon$-Stability over a sequence]
    Let $p_t$ be the price of the backing coin at time $t$. Given some $\varepsilon \geq 0$, we say that a stablecoin mechanism $\mathcal{M}$ is \textbf{$\varepsilon$-stable} over a sequence $(\frac{1}{p_t})_{t \in T}$ if $d_t \in [1-\varepsilon, \, 1+\varepsilon]$ for all $t \in T$. %\aris{do we need the set of agents here? The stability property is on the price, so where do the agents come into play?}
\end{definition}
\begin{definition}[Weak $\varepsilon$-stability]
    A mechanism $\mathcal{M}$ is \textbf{weakly $\varepsilon$-stable} for $\varepsilon \geq 0$ if there exists some non-convergent sequence $(\frac{1}{p_t})_{t \in T}$ over which $\mathcal{M}$ is $\varepsilon$-stable.
\end{definition}

We can conceptualize $(\frac{1}{p_t})_{t \in T}$ as the sequence of prices (in units of backing coin) over time for a stablecoin sold at $\mathdollar 1$. This notion of ``weak stability'' has two nice properties. For one, it seems clear that any mechanism which does \textit{not} satisfy this definition must be a fairly ineffective design. For if a design can only achieve stability over a sequence of eventually-convergent (in other words, eventually stable) prices, then the design does not offer any particular long-term improvement; that is, it does not expand the set of price sequences over which stability is achievable. The second nice property of the definition is that it makes a claim about the long-term \textit{sustainability} of a mechanism. It is not sufficient for the mechanism to provide some short-term protection against volatility; to be considered ``weakly stable,'' the mechanism must be able to offer this protection indefinitely over at least one sequence of non-convergent prices. (Note that a design can fail in catastrophic economic conditions (e.g., $p_t = 0$) and still be weakly $\varepsilon$-stable. This is important because otherwise, no backed stablecoin would ever be considered stable.)

\subsection{Design Taxonomy and Price Window}
Now that we've introduced the primary goal of any stablecoin mechanism, we turn to understanding the various ways these mechanisms try to achieve it. Stablecoin designs are often categorized by the primary and secondary stabilization mechanisms they use.\footnote{See \vcite{p}{bkp19} for a list of some primary and secondary stabilization mechanisms.} The first major taxonomic distinction we (and many others) make between families of stablecoins designs is whether the stablecoins are \textit{backed} or \textit{unbacked} (also known as ``algorithmic''). In this paper, we choose to only focus on the former. For backed stablecoin mechanisms, the value of a stablecoin is derived from agents' ability to redeem their stablecoins to the mechanism for $\beta(\cdot) \approx \mathdollar 1$ of some currency or asset of value (e.g., USD, ETH, BTC etc.). The currency of choice is what we refer to in this paper as the ``backing coin.'' The mechanism stores its supply of backing coins in its \emph{reserves} $R$. We let $R_t$ denote the number of backing coins in $R$ at timestep $t \in T$ (with initial state $R_0 > 0$), noting that the value can change at each timestep in response to agents' actions.

We also assume that agents must pay some amount $\frac{\alpha(\cdot)}{p_t}$ of backing coin to the mechanism to purchase/mint a stablecoin at time $t$.\footnote{While this assumption is not \textit{strictly} necessary theoretically, the burden of any backed stablecoin design, as we shall see, is to maintain sufficient reserves to preserve redeemability. Hence, a mechanism which does not charge users to mint but pays users to redeem is essentially lost before it has even begun.} Mathematically, we characterize backed stablecoins by these two operations, minting and redeeming.
%\footnote{In theory and in practice, there could be other agent-system interactions (referred to as ``secondary'' stabilization mechanisms) which are excluded from this model. For instance, many stablecoin designs have a third token, known as ``governance'' or ``equity'' tokens, and the ways these can be traded between agents and the mechanism requires an additional set of protocols. This is discussed further in the conclusion.} 
The mechanism's chosen functions $\alpha$ and $\beta$, along with the current backing coin price $p_t$, indicate how the mechanism maps a request to mint or redeem, respectively, a single stablecoin to some quantity of backing coins received/given in exchange. The information the mechanism uses to decide $\alpha$ and $\beta$ for each request (i.e., variables upon which $\alpha$ and $\beta$ depend) gives the next branch in our taxonomy.

Let $\theta_t$ represent the state of the system at time $t$, and let $\tau$ represent the particular transaction request. (The state $\theta$, for instance, might include the price of the backing coin and the amount of reserves held in the system, while $\tau$ might represent information about the agent initiating the request, as well as information about the stablecoin in question, such as when, by whom, and at what price it was minted.) These two variables together capture all the information upon which the stablecoin mechanism can determine its mint/redeem price functions $\alpha$ and $\beta$.\footnote{We might later wish to consider some random variable $X$, to allow the possibility that $\alpha$ and/or $\beta$ are stochastic functions; but for now we assume our stablecoin mechanism is deterministic.} From here, we make the following distinction: Any mechanism in which $\alpha$ and $\beta$ depend solely on the state of the system $\theta$ -- i.e., 
$$\alpha(\theta_t, \tau_i) = \alpha(\theta_t, \tau_j)\text{ and } \beta(\theta_t, \tau_i) = \beta(\theta_t, \tau_j) \text{ for all } \tau_i, \tau_j \text{ and for any } \theta_t$$
  -- is said to have agent and stablecoin \textit{interchangeability}, since the same mint and redeem prices at any time $t$ are offered to \textit{all} agents and stablecoins, regardless of their identity or history. 

These designs include ``tokenized'' stablecoins like USDC and USDT, as well as some crypto-backed designs like Cardano's Djed and USDN. (Note: The only difference between these two classes of designs, according to our model, is that the backing coins of the former are stable relative to the fiat currency. That is, for tokenized stablecoins, $\frac{1}{p_t} = 1$ for all $t \in T$.) By contrast, backed designs which use information about the specific agent and transaction to determine the redemption price of a stablecoin (like DAI), are considered \textit{non-interchangeable} under this taxonomy. We refer to the stabilization mechanism of interchangeable backed stablecoin designs as a \emph{\textbf{``price window''}} (for reasons we will explain shortly), and say that mechanisms which use this approach are following the ``price window model'' for stablecoin designs. 

To understand how such mechanisms work, suppose at time $t$ the mechanism offers some dollar mint/redeem prices $\beta(\theta_t) \leq 1 \leq \alpha(\theta_t)$ to all agents. Agents' ability to redeem stablecoins to the mechanism in exchange for $\mathdollar \beta(\theta_t)$-worth of backing coin creates a \textit{price floor} in the market (assuming no market frictions) -- for why would any rational agent sell their stablecoin to the market for less than $\mathdollar \beta(\theta_t)$, when they could obtain a higher value directly from the mechanism? Similarly, the ability to mint for $\mathdollar \alpha(\theta_t)$ creates a \textit{price ceiling} on the market. Together, the price floor and ceiling create a $\mathdollar[\beta(\theta_t), \alpha(\theta_t)]$ \textit{price window} around the equilibrium market price $d_t$ of a stablecoin, so long as the mechanism maintains its capacity to mint and redeem stablecoins at these prices. 

\subsection{Our Scope}
In what follows, we limit the scope of our analysis to stablecoins which follow a price window design. For the sake of simplicity, we limit this scope even further to price window designs where $\alpha()$ and $\beta()$ are constant with time, since these are the most common designs in practice and can be explored using simpler notation and proofs. We can think of these designs as charging fixed ``transaction fees'' $\varepsilon_\alpha, \, \varepsilon_\beta \geq 0$ for every mint or redeem request. However, our results -- especially in \vref{Section}{sec:general-results} -- are generalizable to transaction fees $\varepsilon_\alpha$ and $\varepsilon_\beta$ which are allowed to vary with time, although we do not include the proof in this paper.\footnote{Essentially, when $\varepsilon_\alpha$ and $\varepsilon_\beta$ are allowed to vary with time, \vref{Theorem}{thm:price-window-impossibility-result_constant-epsilon} still applies, but to $\lim \sup \varepsilon_\alpha(t)$ and $\lim \sup \varepsilon_\beta(t)$ rather than $\varepsilon_\alpha$ and $\varepsilon_\beta$.} We formally define the stablecoin mechanism $\mathcal{M}$ studied in this paper by the following minting and redeeming functions:
\begin{itemize}
    \item \textbf{Minting:} We define a minting price function $\alpha : \mathbb{N} \to \mathbb{R}_{\geq 0}$ which assigns to each time $t \in T \cong \mathbb{N}$ a \textit{mint price}, with units in USD. We write 
    $\alpha(t) = 1 + \varepsilon_\alpha$
    for some constant $\varepsilon_\alpha \geq 0$.
    i.e., $\MM$ will always mint a stablecoin for $(1+\varepsilon_\alpha)$-dollar's-worth of backing coins. In units of backing coin, this corresponds to a price of $\frac{1 + \varepsilon_\alpha}{p_t}$ for each time $t$. \smallskip 
    
    \item \textbf{Redeeming:} Similarly, for redeeming we define $\beta : \mathbb{N}\to \mathbb{R}_{\geq 0}$ as follows: for some constant $\varepsilon_\beta \geq 0$, $\beta(t) = 1 - \varepsilon_\beta$ if $R_t > \frac{1 - \varepsilon_\beta}{p_t}$ and $\beta(t) = R_t p_t$, otherwise.
    % \begin{align*}
    %     \beta(t) = \begin{cases}
    %         1 - \varepsilon_\beta & R_t > \frac{1 - \varepsilon_\beta}{p_t}, \\
    %         R_t p_t, & \text{otherwise,}
    %     \end{cases}
    % \end{align*}
    In other words, $\MM$ maintains a $(1 - \varepsilon_\beta)$-dollar stablecoin sale price whenever it has sufficient reserves; otherwise $\MM$ returns as many funds as it can, i.e., the entirety of its reserves. We say that the reserves of $\MM$ are eventually \textit{depleted} if there exists some $t \in T$ for which $R_t = 0$.
\end{itemize}

In this way, and assuming no market frictions, $\MM$ maintains a $\mathdollar [1-\varepsilon_\beta, 1 + \varepsilon_\alpha]$ price window around the market equilibrium price $d_t$ whenever it has sufficient reserves. The first timestep $t$ that $\MM$ is unable to maintain the $\mathdollar[1-\varepsilon_\beta, 1 + \varepsilon_\alpha]$ peg is the time of reserve depletion.

\section{General Results}\label{sec:general-results}
Now that we've introduced our model, we explain the failure path to which price-window stablecoin designs are susceptible and show that for small $\varepsilon$, these designs are \textit{not} weakly $\varepsilon$-stable. In fact, reserve depletion will eventually always occur under the sustained effort of a so-called ``sensitive speculator'' \textit{unless} the sequence of backing coin prices converges, or the size of the transaction fees is proportional to the volatility of the backing coin.

\subsection{Failure Pattern Intuition} Consider the lifecycle of a single stablecoin. Minting at backing coin price $p_{m}$ and redeeming at $p_{n}$ through a mechanism with a  $\mathdollar [1 - \varepsilon_\beta, 1 + \varepsilon_\alpha]$ price window changes the system's reserves by $\tfrac{1+\varepsilon_\alpha}{p_m} - \tfrac{1-\varepsilon_\beta}{p_n}$ units of backing coin. If $\tfrac{1+\varepsilon_\alpha}{p_m} > \tfrac{1-\varepsilon_\beta}{p_n}$, this impact is positive for the system; if not, it is negative. Over many coins, such effects average out if agents mint and redeem non-strategically. But if a speculator deliberately times minting and redemption to maximize these differences, the system can experience systematic reserve depletion, assuming both that the price of the backing coin is sufficiently volatile relative to the sizes of $\varepsilon_\alpha, \varepsilon_\beta$, and that the speculator is sufficiently skilled at recognizing and acting on backing coin price anomalies. And since any backing coin the speculator gains must be taken from the mechanism's reserves, eventually, sustained arbitrage will exhaust those reserves.%\footnote{\label{footnote:no-market} The reader may have noticed that this description suggests the speculator interacts exclusively with $\mathcal{M}$ an not with the market. \kat{Explain.}}

% To capture the notion of such a speculator without sidetracking our global analysis with a mathematically unwieldy object, we choose to model the speculator as essentially a ``black-box'' in this section. (Section \todo{ref}, by contrast, explores what happens when we instantiate a specific speculator.) This black-box speculator is defined below.

\subsection{Defining a ``Sensitive'' Speculator} 
Let $\mathcal{M}$ denote a backed stablecoin mechanism with mint and redeem functions $\alpha(t)$ and $\beta(t)$. Define $s_t$ as the total profit (in units of the backing coin) that could be attained by an optimal, omniscient speculator who: 1) begins with a single unit of backing coin, 2) interacts exclusively with $\mathcal{M}$, and 3) knows the entire sequence of backing coin prices $(\frac{1}{p_t})_{t \in T}$ in advance. Then $s_t$ represents the maximum profit achievable up to timestep $t$ over this fixed sequence of prices.

\begin{definition}[Approximate optimality]
    Suppose a speculator has some number of backing coins $n_0$ at $t=0$. We say that a speculator's strategy is $\boldsymbol{\frac{1}{k}}$\textbf{-approximately optimal} for some $k \geq 1$ if the total profit $r_t$ obtained by the speculator up to any time $t$ is at least $\frac{1}{k}$th that of the optimal value; i.e., $r_t \geq \frac{n_0}{k} s_t$ for all $t \in T$. 
\end{definition}

This definition essentially says that if the mechanism's transaction fees and the sequence of prices offer an opportunity to profit, then the speculator can generally recognize and act on this opportunity, albeit imperfectly. In this way, the definition can be thought of as an assumption about the speculator's ``sensitivity'' to price fluctuations. We formalize this idea in our next definition.
\begin{definition}[Speculator sensitivity]
    A speculator is \textbf{sensitive} if there exists some number $k \geq 1$ such that the speculator's strategy is $\frac{1}{k}$-approximately optimal over any sequence of prices.
\end{definition}
Conveniently, these definitions allow us to construct general results about the asymptotic global dynamics of price window-based stablecoins while making very few assumptions on the internal details of the speculator's strategy or behavior. This leads us to our first main result, the proof of which can be found in \appref{\vref{Appendix}{app:limsup-inf-proof}}{the full-text version}.

\begin{theorem}\label{thm:price-window-impossibility-result_constant-epsilon}
    Suppose a backed stablecoin mechanism $\mathcal{M}$ has finite reserves and constant mint and redeem functions $\alpha(t) = 1 + \varepsilon_\alpha, \, \beta(t) = 1 - \varepsilon_\beta$ for $\varepsilon_\alpha, \varepsilon_\beta \geq 0$. Define $L := (1 + \varepsilon_\alpha) \lim \inf \tfrac{1}{p_t} - (1 - \varepsilon_\beta) \lim\sup \tfrac{1}{p_t}$ and $\varepsilon := \max\{\varepsilon_\alpha, \varepsilon_\beta\}$. Then $\MM$ is $\varepsilon$-stable over $(\frac{1}{p_t})_t$ in the presence of any sensitive speculator if $L > 0$, and only if $L \geq 0$.\footnote{But in practice, \vref{Theorem}{thm:price-window-impossibility-result_constant-epsilon} is almost always an equivalence. We explain why in \appref{\vref{Appendix}{app:limsup-inf-proof}}{the full-text version}.}
\end{theorem}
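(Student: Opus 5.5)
The proof splits into two directions, and both reduce to one quantity: the net change in the mechanism's reserves over the lifecycle of a single stablecoin. If the coin is minted at time $m$ and redeemed at a later time $n$, the reserves change by $\Delta(m,n) := \tfrac{1+\varepsilon_\alpha}{p_m} - \tfrac{1-\varepsilon_\beta}{p_n}$ units of backing coin, as computed in the failure pattern intuition above. A speculator who interacts only with $\MM$ holds backing coins and stablecoins. His total profit up to time $t$ is therefore the sum of $-\Delta(m,n)$ over his completed round trips, plus the value of any coins he still holds, and every unit of profit leaves the reserves. I will relate the sign of $L$ to whether $\sup_{m<n} -\Delta(m,n)$ stays bounded or grows without bound in the tail.

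\emph{Sufficiency ($L>0$).} By the definition of $\liminf$ and $\limsup$, there is a $\delta>0$ and a time $t_0$ such that $(1+\varepsilon_\alpha)\tfrac{1}{p_m} - (1-\varepsilon_\beta)\tfrac{1}{p_n} > \delta$ for all $m,n \ge t_0$. So after $t_0$, every round trip strictly increases the reserves. The only trips that can hurt $\MM$ are those minted before $t_0$, and there are finitely many timesteps before $t_0$. The plan is to bound the total drain from that early window by a finite quantity and argue that the speculator, who starts with finite wealth $n_0$, can only amplify losses in that window a bounded number of times. I would combine this with the standing finite-reserve setup: take $R_0$ large enough relative to this bound, or argue, as the paper implicitly does, that stability is asymptotic. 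The conclusion is that the reserves never hit $0$, so the window $[1-\varepsilon_\beta, 1+\varepsilon_\alpha]$ holds and $d_t$ lies in $[1-\varepsilon,1+\varepsilon]$. This step is delicate. If the statement really means stability for every $R_0$, the early window could drain small reserves, so I expect to read the theorem as holding for sufficiently large (but finite) reserves, or to argue that the bounded pre-$t_0$ profit is dominated by the subsequent strict gains. I flag this interpretation as the main subtlety on this side.

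\emph{Necessity ($L<0$ implies failure).} If $L<0$, then there are infinitely many pairs $m_j<n_j$ with $m_j\to\infty$ and $-\Delta(m_j,n_j) > \eta$ for some fixed $\eta>0$. To get them, pick times near the $\limsup$ of $\tfrac1{p_t}$ for redeeming and times near the $\liminf$ for minting. Both occur infinitely often, so the pairs can be arranged in increasing, disjoint order. The omniscient speculator starting with one unit of backing coin mints $p_{m_j}/(1+\varepsilon_\alpha)$ stablecoins at $m_j$ and redeems them at $n_j$. Each cycle multiplies his holdings by a factor $\ge 1+\eta'$ for some fixed $\eta'>0$, so $s_t\to\infty$; even additive compounding suffices. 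A sensitive speculator with some fixed $k$ earns $r_t \ge \tfrac{n_0}{k}s_t\to\infty$. That profit is taken from finite reserves, so at some time $t$ we get $R_t < \tfrac{1-\varepsilon_\beta}{p_t}$ and reserves are depleted. After that point $\beta(t)=R_tp_t$, the price floor disappears, and $d_t$ leaves $[1-\varepsilon,1+\varepsilon]$, so $\MM$ is not $\varepsilon$-stable.

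One technical point remains: the claim that $d_t$ must exit the interval once the floor is gone. I would justify it with the no-friction equilibrium reasoning from the model section. Once $\MM$ cannot redeem a stablecoin for $1-\varepsilon_\beta$ dollars, nothing supports its price and it falls below $1-\varepsilon$. The gap case $L=0$ is excluded from both directions, which matches the theorem's ``if $L>0$, only if $L\ge0$'' phrasing.
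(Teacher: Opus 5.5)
Your necessity direction follows the paper's argument. The paper also shows that $L<0$ forces the optimal profit $s_t\to\infty$ (its Claim~2, in contrapositive form). It then shows that $r_t\ge\frac{n_0}{k}s_t$ eventually exceeds any finite $R_0$ (its Claim~1). Your explicit disjoint pairs $m_1<n_1<m_2<n_2<\cdots$ with $-\Delta(m_j,n_j)>\eta$ actually justify more cleanly the paper's bare assertion that bounded $s_t$ forces the adjusted price gaps to close. Your observation that no round trip after $t_0$ is profitable is exactly the paper's bound $s_N^*<0$. On your ``technical point'': the paper never derives that $d_t$ leaves the window after depletion. It simply identifies depletion with loss of the peg, so you need not argue about the market price either.

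You diverge from the paper at the end of the $L>0$ direction. Your worry is correct. Read literally (every sensitive speculator, every finite $R_0$), the claim fails whenever $s:=\lim s_t>0$: the optimal speculator ($k=1$) with $n_0 s>R_0$ drains the reserves in the early window. The paper resolves this neither by enlarging $R_0$ nor by an asymptotic notion of stability. Instead it uses the fact that $k$ is free in the definition of sensitivity. If $s_t\to s<\infty$, take the speculator that copies the optimal trades scaled by $1/k$, with $k>sn_0/R_0$. It is sensitive, yet it extracts at most $\frac{n_0}{k}s<R_0$. So sensitivity alone never guarantees depletion, and that is the paper's reading of ``stable in the presence of any sensitive speculator.'' Your reading, $R_0>n_0 s$, also works and gives a stronger guarantee, since it defeats even the optimal speculator. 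However, it adds a quantitative hypothesis on $R_0$ that the statement does not contain. Your other fallback, that the bounded pre-$t_0$ profit is ``dominated by the subsequent strict gains,'' does not work. Nothing forces a speculator to make the losing round trips after $t_0$, so there are no such gains to rely on.
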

    
\subsection{Interpretation}
Recall that $\lim\sup \tfrac{1}{p_t} := \lim_{N \to \infty} \sup\left\{\tfrac{1}{p_t} : t > N\right\}$ and that $\lim\inf \tfrac{1}{p_t} := \lim_{N \to \infty} \inf\left\{\tfrac{1}{p_t} : t > N\right\}$. To interpret \vref{Theorem}{thm:price-window-impossibility-result_constant-epsilon}, we begin with the simple case of $\varepsilon_\alpha = \varepsilon_\beta = 0$. In this case, our claim implies that reserve depletion occurs if and only if $\lim \inf \tfrac{1}{p_t} = \lim \sup \tfrac{1}{p_t}$,
%\kat{This should be a strict inequality.} 
which, by the squeeze theorem, implies the original sequence $(\tfrac{1}{p_t})_{t \in T}$ converges. Thus, no price window-based stablecoin mechanism which mints and redeems at \$1 can be weakly 0-stable. The only designs of this type capable of both short- and long-term stability, therefore, are those for which $(\frac{1}{p_t})_{t \in T}$ always converges -- for instance, USDT and USDC, where $\frac{1}{p_t} = 1$ for all $t \in T$.

In the more general case, we note that the total profit of sensitive speculators can be scaled to be \textit{arbitrarily small}. As a consequence, the only sequences over which any such speculator is guaranteed to deplete the system's reserves must have infinite total profit. Since each discrete time interval only offers finite profit (even for an optimal speculator), the only way profit can become infinite is in the tail in the price sequence. This explains why our criteria for $\varepsilon_\alpha$ and $\varepsilon_\beta$ (and the subsequent proofs) are only concerned with the behavior of the \textit{tail} of the price sequence.

To understand more specifically the role of $\lim\inf\frac{1}{p_t}$ and $\lim\sup\frac{1}{p_t}$, we can interpret these values as numbers which capture the range of values in the tail, since $\lim\sup$ is the tail's least upper bound and $\lim\inf$ is its greatest lower bound. If $\lim\inf\frac{1}{p_t}$ and $\lim\sup\frac{1}{p_t}$ are both finite, then the quantity $\lim\sup\frac{1}{p_t} - \lim\inf\frac{1}{p_t}$ represents the range or spread of the tail of the price sequence. This value can be thought of as a measure of the price sequence's asymptotic \emph{volatility}: the larger the difference between $\lim\inf\frac{1}{p_t}$ and $\lim\sup\frac{1}{p_t}$, the more volatile the asymptotic behavior of the price sequence. 

In this way, charging ``transaction fees'' can be thought of as a way of flattening the spread of mint and redeem prices (scaling $\lim\sup\frac{1}{p_t}$ and $\lim\inf\frac{1}{p_t}$ by $(1 - \varepsilon_\beta)$ and $(1 + \varepsilon_\alpha)$, respectively) to decrease the range over which a speculator can asymptotically profit. Our claim essentially says that a mechanism can be weakly $\varepsilon$-stable if and only if $\varepsilon_\alpha$ and $\varepsilon_\beta$ are large enough to make this ``scaled'' spread asymptotically tend to zero.

If we set $\varepsilon_\alpha = \varepsilon_\beta = \varepsilon$, then, according to our result, $\varepsilon$ must satisfy 
\begin{align*}
    \varepsilon \geq \frac{\lim \sup \tfrac{1}{p_t}  - \lim\inf \tfrac{1}{p_t}}{\lim\sup \tfrac{1}{p_t} + \lim \inf \tfrac{1}{p_t}}
\end{align*} 
in order to prevent reserve depletion. Here, we can directly see the relationship between the threshold value of $\varepsilon$ and the price sequence's volatility: the two quantities are proportional to one another. (Note that the denominator, $\lim\sup \tfrac{1}{p_t} + \lim \inf \tfrac{1}{p_t}$, can be interpreted simply as a normalizing factor. It is unrelated to the volatility of the price sequence.) As a result, the range of the price window offered by the mechanism becomes proportional to the spread in the original price sequence $(\frac{1}{p_t})_{t \in T}$. Recall that the key advantage of price windows is their ability to constrain (rational) agents to only trade stablecoins on the market at prices within the offered price window. This is the strategy by which such mechanisms maintain stability. However, our result shows that in order to be sustainable, the size of this price window must essentially be as large as the volatility of the original price sequence. As a result, the market price of stablecoins is allowed to be, proportionately, as volatile as the backing coin against which the system aims to protect.

Of course, transaction fees do help delay a system's time to depletion by adding more reserves to the system as a buffer against arbitrage attacks. However, used in isolation, we conclude that price windows are insufficient to simultaneously ensure both short-term and long-term stability over a volatile backing coin.

\section{Case Study: A Speculator with IID Price Draws}\label{sec:case-study}
The theoretical advantage of the previous section lay in its generality and its ability to capture system-level dynamics with very few assumptions. However, almost necessarily, this generality comes at the cost of overlooking some practical questions that are useful to system designers and managers. In this section, we explore how such specifics can be modeled within the broader framework outlined in \vref{Section}{sec:general-results}. For instance, how likely is this failure to occur in practice? How powerful must a speculator be, and how long must they persist, to inflict real damage on the system? 

To address these questions, we forego our black-box treatment and implement a concrete model of a speculator. Since our earlier claims relied on the existence of a sensitive speculator, here we propose a possible implementation and compute the expected time required for reserve depletion under the assumptions of no transaction fees (i.e., $\varepsilon_\alpha = \varepsilon_\beta = 0$) and independent, identically distributed (i.i.d.) backing coin prices drawn from a known distribution. Note that this section is intended as a high-level summary rather than an in-depth technical analysis. For a full breakdown of the definitions, equations, and results referenced here, we direct the reader to \appref{\vreftwo{Appendices}{app:speculator}{app:expected-portfolio}}{the full-text version}, where each concept is laid out in detail.

\subsection{Modeling the Speculator}
In this section, we consider a speculator who has intrinsic value for backing coins but values stablecoins only as a means of obtaining more backing coins. The speculator interacts with a price window-based mechanism $\mathcal{M}$ with $\alpha(t) = \beta(t) = 1$ over a sequence of timesteps, and at each timestep, in response to the revealed backing coin price $p_t$, either chooses some number $\Delta_t$ of stablecoins to buy ($\Delta_t > 0$) or sell ($\Delta_t < 0$); or simply waits until the next timestep ($\Delta_t = 0$). 
The speculator has some expected utility function $u$ over portfolios of stablecoins and backing coins and chooses $\Delta_t$ as the number which maximizes their expected utility. 

In our model, we assume the speculator's utility function depends on three parameters: their level of (im)patience $\delta$, their beliefs about backing coin prices, and their risk tolerance $\Lambda$. Specifically, $\delta \in [0, 1)$ captures the rate at which the speculator discounts future value, with low $\delta$ indicating a patient speculator, and high $\delta$ an impatient one. For speculator beliefs about backing coin prices, we assume 1) that backing coin prices are drawn independently at random from some distribution with probability density function $f(x)$, and 2) that $f(x)$ is known to the speculator. Finally, we define the variable $\Lambda \in [0, 1]$ to indicate the speculator's risk tolerance, with $\Lambda = 1$ representing extreme risk aversion and $\Lambda=0$ representing risk neutrality.
%\footnote{As in Section \ref{sec:general-results}, we assume that there is only a single representative speculator, and that the speculator chooses only to interact with the mechanism $\mathcal{M}$, rather than buying or selling stablecoins through a secondary market.\kat{Explain these assumptions / reference remarks in earlier section.}} 
To simplify our analytic results, we also assume that the mechanism $\mathcal{M}$ uses a price window design with $\varepsilon = 0$. %, although we relax this assumption in our simulations. \kat{Do this or change sentence.} 

Under this model, the speculator’s strategy can be seen as a tradeoff: on one hand, they might use all their funds to capitalize on any opportunity for profit, however small, but risk being unable to act when truly exceptional prices occur. On the other hand, they can wait for rare opportunities with exceptional prices, accepting delayed gratification in hopes of obtaining larger profits later.

\subsection{Understanding the Conditions for (In)Stability}
\label{sec:unpacking-stability}
In \appref{\vreftwo{Appendices}{app:speculator}{app:expected-portfolio}}{the full-text version}, we show how we can solve the speculator's maximization problem at each timestep $t$ and, assuming prices are drawn independently at random from our known distribution, compute a general formula for the expected number $n_t$ of backing coins the speculator owns at any time $t$ for $n_0 > 0$. From here, we can find the expected time to depletion of a mechanism with an initial number $R_0$ of backing coins in its reserves, though it is not always possible to solve for this time using analytic methods unless $\Lambda = 0$.

\begin{example}\label{ex:lambda=0}
    Suppose backing coin prices are drawn uniformly at random from a normal distribution with $\mu = \sigma^2 = 100$, and suppose the speculator chooses $\Lambda=0$ and $\delta = 0.1$. Then setting the ratio $\frac{R_0}{n_0}$ to 100 and computing $a_1$ according to the formulas in \appref{\vref{Lemma}{Lemma:eigenvalues} in \vref{Appendix}{app:expected-portfolio}}{the full-text version} gives a reserve depletion which is expected to occur at approximately $k(t) = \log(1 + \frac{R_0}{n_0}) / \log(a_1) \approx 15.78$, which in turn corresponds to approximately $227$ timesteps.\footnote{See \appref{\vref{Appendix}{app:k-to-t}}{the full-text version} on how to convert from $k(t)$ to the number of expected timesteps.} We remark that a ``timestep'' is designated by each time the speculator interacts with the stablecoin mechanism. Assuming, for example, relatively frequent interactions of once every hour, as well as the price updates instructed by our distributional assumptions, this calculation shows that depletion of resources will take place in less than 10 days. 
\end{example}

 Our formula for the speculator's expected portfolio also allows us to show the analogous ``convergence result'' to \vref{Theorem}{thm:price-window-impossibility-result_constant-epsilon}. In \appref{\vref{Appendix}{app:speculator-reserve-depletion-results}}{the full-text version}, we show that the speculator's optimal strategy is to buy stablecoins whenever the backing coin price $p_t$ falls above some threshold price $y_2$, and to sell whenever $p_t$ falls below some price $y_1 \leq y_2$. From here, we prove that, assuming independent price draws from a distribution with probability density function $f(x)$ and cumulative distribution function $F(x)$, a speculator with $n_0 > 0$ can never deplete a system's reserves if and only if $Y = \frac{\frac{1}{1 - F(y_2)} \int_{y_2}^\infty x \, f(x) \, dx}{\frac{1}{F(y_1)} \int_{-\infty}^{y_1} x \, f(x) \, dx} = 1$ (or if the speculator is so risk averse that $\Delta_t = 0$ for all $t \in T$). Moreover, the larger the value of $Y$, the more rapidly the reserves are depleted. We can use \appref{\vref{Lemma}{lemma:normal_xf(x)_expansion} in \vref{Appendix}{app:normal-distr-expansion-of-Y}}{a formula from the full-text version} to expand the functions $f(x)$ and $F(x)$ in the case of a normal distribution as
 $$Y = \frac{ \mu + \sigma^2 \frac{f(y_2)}{1 - F(y_2)}}{\mu - \sigma^2 \frac{f(y_1)}{F(y_1)}}.$$

From this representation, the relationship between a distribution's volatility and $Y$ is clear: $Y = 1$ if and only if $\sigma^2 = 0$, and the larger the volatility $\sigma^2$, the larger the value of $Y$. This result corresponds nicely to the more general result in \vref{Theorem}{thm:price-window-impossibility-result_constant-epsilon}, where we showed the relationship between the range of a price sequence's tail and the opportunity for a speculator to deplete the system's reserves. Here, we see that a large value for $Y$ corresponds to a large interval surrounding the tail of the price sequence.

We can see the same relationship in our numerical simulations. Consider, for instance, \vref{Figure}{fig:different-deltas}, where we note that the expected time to reserve depletion increases \textit{dramatically} for small values of $\sigma^2$ -- note that the scale of the $y$-axis is logarithmic. %\kat{If we include $\varepsilon$, comment on this, too.} 
In the next section, we show that the same behavior arises even when we relax the assumption of independent price draws.

\begin{figure}[h!]
    \centering
    \includegraphics[width=0.65\linewidth]{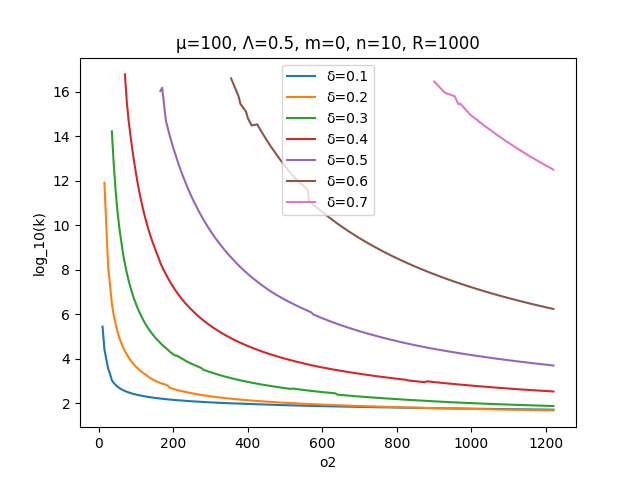}
    \caption{The log of the expected time to depletion for prices drawn independently at random from a normal distribution with $\mu = 100$ and various values of $\sigma^2$, given a speculator with $\delta \in [0.1, 0.2, 0.3, 0.4, 0.5, 0.6, 0.7]$, $\Lambda = 0.5$, $n_0 = 10$, and initial reserves $R_0 = 1000$.}
    \label{fig:different-deltas}
\end{figure}

\subsection{Simulations}

    For our results in \vref{Section}{sec:unpacking-stability}, we assumed that 1) the price of the backing coin in each timestep is drawn independently at random from some distribution; and 2) the speculator believes that this is the case. While this assumption of independence allows us to produce exact, analytic solutions, in many cases it might not always accurately reflect price fluctuations in reality. In this section, we demonstrate through simulations that the risk of reserve depletion does not depend on the assumption of price independence. Moreover, we find that, as in the case of independent prices, the time it takes for reserves to be depleted decreases as the volatility of the prices grows. Therefore, our simulations convey essentially the same message, just in an environment where the price updates are not independent. This is unsurprising, given the general result shown in \vref{Theorem}{thm:price-window-impossibility-result_constant-epsilon}.
    
    To achieve this, we coded an adapted version of our speculator and designed a simulator which, given a sequence of backing coin prices\footnote{This could be a fixed dataset; or could be constructed in real-time e.g., by repeatedly querying some random process. We use both approaches in different simulations.} and an initial reserve value $R_0$, repeatedly updates the reserves $R_t$ according to the speculator's response to the revealed backing coin price $p_t$. The simulation stops either when the reserves have been depleted ($R \leq 0$), or the pre-fixed maximum number of iterations (in our case, 100,000) has been reached.  
    Note that, unlike in the case of independent price draws, the speculator’s strategy no longer guarantees that every stablecoin is sold at a higher price than it was bought, nor that reserves are eventually depleted when $\sigma^2 > 0$.

\subsection{Simulating with Real-World Data}
To demonstrate that, without any interventions or contingency mechanisms, reserve depletion is nevertheless a very real possibility in real-world stablecoin systems, we collected historical price data on Bitcoin (BTC) and Ethereum (ETH) -- the two most popular cryptocurrencies in the present day -- from April 2022 to April 2025. Each data point represents the hourly opening price of that currency on the online trading platform Coinbase \vcite{p}{coinbase}, giving us a sequence of roughly 26,000 prices for each currency. We then simulated the speculator's response against each sequence to see whether reserve depletion could occur during this three-year time frame. We assumed that each of the reserves had an initial store of $R_0 = 1000$ coins, and we calculated the smallest value $R_{min}$ the reserves hit during the course of the simulation, given different initial values for $n_0$. The results are shown in \vref{Table}{table:ETH-vs-BTC-R_min-simulations}, where each entry in the table represents $R_{min}$ for a particular currency, given an initial speculator endowment of $n_0$ coins.

% Requires: \usepackage{booktabs}
\begin{table}[t]
  \centering
  \begin{tabular}{|c|c c|}
    \hline
    Initial endowment of $\mathbf{n_0}$ & \textbf{ETH} & \textbf{BTC} \\ \hline
    100 & \; 854.82 \; & \; 879.59 \; \\ 
    250 & \; 637.04 \; & \; 698.96 \; \\ 
    500 & \; 274.08 \; & \; 397.93 \; \\ 
    750 & 0 & 96.89 \\ 
    1000 & 0 & 0 \\
    \hline
  \end{tabular}
   \vspace{0.3cm}
  \caption{The smallest value $R_{min}$ of the reserves during each simulation using three years' worth of historical price data from each Ethereum (ETH) and Bitcoin (BTC), for various speculator initial endowments $n_0$ and initial reserves $R_0 = 1000$.}
  \label{table:ETH-vs-BTC-R_min-simulations}
\end{table}

Thus, we see that given a large enough initial endowment, reserve depletion eventually occurred for both types of backing coins -- even when the speculator used a strategy that had not been optimized specifically for real-world prices. It is also notable that reserves of Ether depleted more quickly in practice than reserves of Bitcoin. This perhaps can be attributed to the fact that Ether historically has had higher rates of volatility than Bitcoin. To explore this hypothesis further, we examine the impact of different levels of price volatility on both the probability of reserve depletion within a certain time frame, and the time to reserve depletion for different sets of simulated prices.

In particular, the price fluctuations of stocks in financial markets are often modeled as (Gaussian) random walks (e.g., see \vcite{p}{fama1970efficient,fama1965behavior,cootner1964random}, and \vcite{p}{urquhart2016inefficiency,aggarwal2019bitcoins} for a discussion specifically related to cryptocurrencies), in which the \emph{change} in prices (rather than the prices themselves) are drawn independently at random from some normal distribution with average step size $\mu_{\text{step}}$ and standard deviation $\sigma_{\text{step}}$. In our simulations, we compute the average value of $\mu_\text{step}$ and $\sigma_\text{step}$ over our three-year collection period for each BTC and ETH: The average values we computed from our datasets and used in our simulations were $\mu_\text{step} \approx -0.056, \, \sigma_\text{step} \approx 16.7, \, p_0 = \$ 3047.70$ for Ethereum, and $\mu_\text{step} \approx 1.66, \, \sigma_\text{step} \approx 302, \, p_0 = \$40508.04$ for Bitcoin. Then, we simulate random walks starting from the same price $p_0$ and with the same value of $\mu_\text{step}$, but with different values of $\sigma_\text{step}$, to show how both the probability of reserve depletion within 100,000 time steps, as well as the expected time-to-depletion of those simulations which \textit{do} end in reserve depletion, change with $\sigma_\text{step}$. For each value of $\sigma_\text{step}$, we run 100 different random walks and aggregate the results. The outcome of these simulations is shown in \appref{\vref{Figure}{fig:BTC-and-ETH_random_walks} of \vref{Appendix}{app:simulations}}{the full-text version}.

For both currencies, we can see a clear trend: as the average step size $\sigma_\text{step}$ between prices increases, so too does either the probability of reserve depletion (in the case of Bitcoin), as well as the rate of reserve depletion (for both currencies; but for Ether especially). When $\sigma_\text{step}$ is very close to zero, the reserves are practically never depleted. %\kat{What about including $\varepsilon$ here? Attempt if time.}

\section{Future Work}
Throughout the paper, we made several simplifying assumptions to either make our work tractable or to highlight what we believed to be the important features of the model without adding too much complexity. In this section, we bring attention to some of these assumptions, explaining why they were made and what impact removing them would likely have.

\paragraph{No competitive market.} As readers may have noticed, we do not directly model a competitive market in this paper; instead, we simply argue that the price window constrains the equilibrium market price of a stablecoin. This exclusion stems from two motivations: 
First, we do not believe adding a market would alter the system’s long-term behavior and emergent properties so long as market funds are finite: a speculator might arbitrage against the market for a time, effectively delaying reserve depletion, but once those funds are exhausted, the system's reserves face the same fate. Essentially, there is a conservation principle at play here: finite market funds and system reserves cannot withstand infinite arbitrage opportunities. The second reason we excluded a market from this model is that, though the intuitive effect is simple, we were unable to find an equally simple representation that captured the important information but did not over-complicate our model. In future work, we would like to add a minimal representation of a market, akin to our notion of a sensitive speculator.

\paragraph{A single speculator.} 
In our model, we assume there are no other agents besides a single speculator. This is done primarily for the sake of simplicity, and because the underlying dynamics of a single speculator on the system are similar even in the presence of other agents, if somewhat dampened and obfuscated beneath the other signals. In particular, multiple non-strategic speculators have an additive impact on reserve depletion, and simply decrease the expected time to reserve depletion. The presence of other non-speculative agents (who wish to buy/sell stablecoins only to use them, not to profit from arbitrage over price fluctuations) can extend the time to reserve depletion by adding additional funds to the stablecoin system's reserves; but if the agents are non-speculative, then there will not be a strong correlation between buy/sell times of these agents and the price of the backing coin. Thus, the underlying impact of the speculator is still present, if somewhat obscured by other non-speculative trades. However, modeling the interactions with multiple \textit{strategic} speculators would be an interesting extension of this work.

\paragraph{Secondary stabilization mechanisms.} This paper showed that a mechanism which depends on price windows as its only stabilization mechanism cannot be both short- and long-term stable. In our opinion, the most interesting extension of this work would be to understand how and whether secondary stabilization mechanisms, such as governance or equity tokens, would change this behavior. Many stablecoin designs include such ``dual'' tokens, the idea being to sell these tokens cheaply when reserves are low to raise funds, and to allow agents to redeem them at higher prices in the future. In the spirit of our stated goal of formalizing a mathematical framework, we would like to understand not just how known secondary mechanisms impact the probability and rate of reserve depletion, but also whether the known designs represent the entirety of the possible design space. For this, we suspect that a full or partial axiomatization of the stablecoin design space could help us restrict the scope of these secondary mechanisms.

\bibliographystyle{plainnat}
\bibliography{content/references}

\clearpage
\appendix

\section*{APPENDIX}

\section{Proof of weak \texorpdfstring{$\varepsilon$}{Ɛ}-stability conditions}\label{app:limsup-inf-proof}

\begin{proof}[Proof of Theorem \ref{thm:price-window-impossibility-result_constant-epsilon}]\label{pf:price-window-impossibility-result_constant-epsilon}
    We segment this proof into two claims, which we prove separately.
    \begin{enumerate}[(1)]
        \item Suppose our mechanism $\mathcal{M}$ has finite initial reserves $R_0 > 0$. Then any sensitive speculator with $n_0 > 0$ can eventually deplete the system's reserves if and only if the total profit attainable by an \textit{optimal} speculator tends to infinity (i.e., $\lim_{n \to \infty} s_n \to \infty$).
        \item $\lim_{n\to\infty} s_n = s \in \mathbb{R}$ if $L = (1 + \varepsilon_\alpha) \lim \inf \tfrac{1}{p_n} - (1 - \varepsilon_\beta) \lim\sup \tfrac{1}{p_n} > 0$, and only if $L \geq 0$.
    \end{enumerate}
    \begin{proof}[Proof of (1)]
    Suppose there exists a sensitive speculator; i.e., there exists some $k \geq 1$ such the total profit $r_n$ obtained by the speculator up to any time $n$ satisfies 
    $$r_n \geq \frac{n_0}{k} \, s_n.$$
    Then the total profit of the speculator as $n \to \infty$ is lower-bounded by $\lim_{n \to \infty} \frac{n_0}{k} \, s_n = \frac{n_0}{k} \lim_{n \to \infty} s_n$. Now suppose $\lim_{n \to \infty} s_n \to \infty$. Then by the definition of divergence to infinity, for any $M > 0$, there must be some $N \in \mathbb{N}$ for which $n > N$ implies $s_n > M$. Choosing $M = \frac{k R_0}{n_0}$ gives us
    $$r_n \geq \frac{n_0}{k} s_n > \frac{n_0}{k} M = R_0$$
    for any $n > N$. Therefore, any sensitive speculator (in the absence of other agents or protocols) can eventually deplete the system's reserves.
    
    On the other hand, if $\lim_{n \to \infty} s_n = s$ for some $s \in \mathbb{R}$, then we can construct a sensitive speculator who, given initial endowment $n_0$, would be unable to ever deplete the system's reserves, which we assume to have initial value $R_0 > 0$. 
    
    In particular, consider a speculator whose profit up to each timestep $n$ is exactly $r_n = \frac{n_0}{k} s_n$ for $k > \frac{s n_0}{R_0}$. (Note: Such a speculator can be shown to exist by construction: simply let the speculator buy/sell at the same timesteps as the optimal one, but with value weighted by $1/k$.) Such a speculator is sensitive by definition, and yet their total profit as $n\to\infty$ is strictly less than $R_0$, since
    $$\lim_{n \to \infty} r_n = \frac{n_0}{k} \lim_{n \to \infty}  s_n < \frac{n_0}{\tfrac{sn_0}{R_0}} s = R_0.$$
    Therefore, not every sensitive speculator is capable of forcing a reserve depletion when $\lim_{n \to \infty} s_n$ is finite. This concludes the proof for Claim 1.
    \end{proof}
    \begin{proof}[Proof of (2)]
        First, we assume that both $\lim\sup \frac{1}{p_n}$ and $\lim\inf \frac{1}{p_n}$ are finite. This is without loss of generality: for if $\lim\sup \frac{1}{p_n} = \infty$, then the original price sequence $(p_n)_n$ must tend to zero. And if $\lim_{n \to \infty} p_n = 0$, then the value of the reserves, $R_n p_n$, also tends to zero, regardless of any actions of the speculator. If $\lim\sup \frac{1}{p_n}$ is finite, then so too must be $\lim\inf \frac{1}{p_n}$, since $0 \leq \lim\inf\frac{1}{p_n} \leq \lim\sup\frac{1}{p_n}$.

        % IDEA: If the price window is large enough (i.e., the property holds), then after a certain point, the sup and inf become arbitrarily close to their limits, and so the max profit attainable between any two timesteps past that point is non-positive. 
        Now suppose $\varepsilon_\alpha, \, \varepsilon_\beta \geq 0$ satisfy $L = (1 + \varepsilon_\alpha) \lim\inf \tfrac{1}{p_n} - (1 - \varepsilon_\beta) \lim\sup \tfrac{1}{p_n} > 0$. By the limit definitions of $\lim \sup$ and $\lim \inf$ (where for any sequence $(x_n)_{n \in \mathbb{N}}$, we have $\lim\sup x_n := \lim_{N \to \infty} \sup\{x_n : n > N\}$ and $\lim\inf x_n := \lim_{N \to \infty} \inf \{x_n : n > N\}$), then for any $c > 0$, we can find some $N \in \mathbb{N}$ such that both of the following statements hold:
    \begin{itemize}[$\bullet$]
        \item $\sup \{ \frac{1 - \varepsilon_\beta}{p_n} : n > N\} - \lim \sup \frac{1 - \varepsilon_\beta}{p_n} < c$,
        \item $\lim \inf \frac{1 + \varepsilon_\alpha}{p_n} - \inf \{ \frac{1 + \varepsilon_\alpha}{p_n} : n > N\} < c$.
    \end{itemize}
    
    In particular, there exists some $N$ such that the statements above hold for $c = \frac{L}{2}$.
    We define $s_N^* := \sup\{ \frac{1 - \varepsilon_\beta}{p_n} - \frac{1 + \varepsilon_\alpha}{p_m} : m < n < N \}$ to be the largest difference between any two price-window-adjusted mint and redeem prices $\frac{1 + \varepsilon_\alpha}{p_m}$ and $\frac{1 - \varepsilon_\beta}{p_n}$, respectively, with $N < m < n$; it follows immediately that $s_N^* \leq \sup \{ \frac{1 - \varepsilon_\beta}{p_n} : n > N\} - \inf\{\frac{1 + \varepsilon_\alpha}{p_n} : n > N\}.$ From here, it follows that 
    \begin{align*}
        s_N^* &\leq \sup \left\{ \tfrac{1 - \varepsilon_\beta}{p_n} : n > N \right\} - \inf\left\{\tfrac{1 + \varepsilon_\alpha}{p_n} : n > N\right\} \\
        &< \left(\lim \sup \tfrac{1 - \varepsilon_\beta}{p_n} + \tfrac{L}{2}\right) - \left(\lim \inf \tfrac{1 + \varepsilon_\alpha}{p_n} - \tfrac{L}{2}\right) \\
        &= \left(\lim \sup \tfrac{1 - \varepsilon_\beta}{p_n} - \lim \inf \tfrac{1 + \varepsilon_\alpha}{p_n}\right) + L \\
        &= \left((1 - \varepsilon_\beta)\lim \sup \tfrac{1}{p_n} - (1 + \varepsilon_\alpha)\lim \inf \tfrac{1}{p_n}\right) + L \\
        &= 0.
    \end{align*}
    Therefore, the speculator is unable profit after timestep $N$, meaning their total profit must be finite.
    
    \paragraph{A note on equivalence.} In most cases, Claim 2 is an equivalence. The only degenerate boundary case is when $L = 0$ and the price sequence $(\frac{1}{p_n})_n$ is such that the amount of profit the speculator can obtain between any two timesteps tends to zero, but the \textit{sum} tends to infinity. An example of such a price sequence would be the harmonic sequence $1 - \frac{1}{2}, 2 + \frac{1}{2}, 1 - \frac{1}{3}, 2 + \frac{1}{3}, 1 - \frac{1}{4}, 2 + \frac{1}{4}, \ldots$. This sequence never converges, and setting $\varepsilon_\alpha = \varepsilon_\beta = \frac{1}{3}$ implies $L = 0$. The optimal strategy would be to buy/sell at each timestep, leading to a sequence of profits $(\frac{2}{n})_n$. Each term therefore tends to zero, but the sum of profit starting at any $N$ is $\sum_{n=N}^{\infty} \frac{2}{n} \to \infty$. Note that this pattern requires the sequence to take values in a continuous range. When the sequence is restricted to a discrete set of values (as with most price sequences), such a case cannot occur and so Claim 2 holds with equivalence.
    
    \paragraph{} Conversely, suppose $\lim_{n \to \infty} s_n = s \in \mathbb{R}$. Then the differences between the price window-adjusted mint and buy prices must tend to zero -- for otherwise, there would be infinite opportunity to profit as $n \to \infty$. Formally, we can write that for any $\delta > 0$, there exists $N \in \mathbb{N}$ such that for all $m, n > N$, we have 
    \begin{align*}
        (1 - \varepsilon_\beta) \tfrac{1}{p_n} - (1 + \varepsilon_\alpha) \tfrac{1}{p_m} < \tfrac{\delta}{2}.
    \end{align*}
    Moreover, for any $c > 0$, there must exist terms $\frac{1}{p_n}, \, \frac{1}{p_m}$ of the price sequence with $N < m < n$ satisfying 
    \begin{itemize}[$\bullet$]
        \item $\sup\{\tfrac{1}{p_n} : n > N\} \leq \frac{1}{p_n} + c$ and 
        \item $\inf\{\tfrac{1}{p_n} : n > N\} \geq \frac{1}{p_n} - c$;
    \end{itemize}
    otherwise, the tail of the sequence would have, respectively, a lesser upper bound and a greater lower one. Suppose we choose $c := \frac{\delta}{2(2 + \varepsilon_\alpha - \varepsilon_\beta)}$.
    
    Then the statements above, along with the definitions of $\lim\inf$ and $\lim\sup$, imply that
    \begin{align*}
        (1 - \varepsilon_\beta)\lim\sup\tfrac{1}{p_n} -  (1 + \varepsilon_\alpha) \lim\inf\tfrac{1}{p_n}
        &\leq (1 - \varepsilon_\beta)\sup\{\tfrac{1}{p_n} : n > N\} -  (1 + \varepsilon_\alpha) \lim\inf\{\tfrac{1}{p_n} : n > N\} \\
        &\leq (1 - \varepsilon_\beta)(\tfrac{1}{p_n} + c) -  (1 + \varepsilon_\alpha) (\tfrac{1}{p_m} - c) \\
        &= (1 - \varepsilon_\beta)\tfrac{1}{p_n} -  (1 + \varepsilon_\alpha) \tfrac{1}{p_m} + c(2 + \varepsilon_\alpha - \varepsilon_\beta ) \\
        &= (1 - \varepsilon_\beta)\tfrac{1}{p_n} -  (1 + \varepsilon_\alpha) \tfrac{1}{p_m} + \tfrac{\delta}{2} \\
        &< \delta.
    \end{align*}
    Therefore, since $-L = (1 - \varepsilon_\beta)\lim\sup\tfrac{1}{p_n} -  (1 + \varepsilon_\alpha) \lim\inf\tfrac{1}{p_n}$ is strictly less than $\delta$ for every $\delta > 0$, we must have $L \geq 0.$ 
    \end{proof}
    To conclude, we note that the price window ensures the rational market price of a stablecoin will be in the range $[1 - \varepsilon_\beta, \, 1 + \varepsilon_\alpha] \subseteq [1 - \varepsilon, \, 1 + \varepsilon]$ for $\varepsilon = \max\{\varepsilon_\alpha, \varepsilon_\beta\}$, unless and until the mechanism's reserves are depleted. This concludes the proof of Theorem~\ref{thm:price-window-impossibility-result_constant-epsilon}.
\end{proof}

\section{Modeling the speculator.}\label{app:speculator}

In this section, we supply a technical description of the speculator. Let $\mathbf{x_t} = (m_t, n_t)$ represent the speculator's portfolio of stablecoins and backing coins, respectively, at time $t \in T$. The speculator interacts with the mechanism $\mathcal{M}$ according to the following interaction model:
\begin{tcolorbox}
    
\subsubsection{The Interaction Model:} At each timestep $t \in T$: 
\begin{enumerate}[label={(\arabic*)}]
    \item The price $p_t$ of the backing coin is revealed. \medskip
    
    \item The speculator either 
    \begin{itemize}
        \item chooses some number $\Delta_t$ of stablecoins to buy ($\Delta_t > 0$) or sell ($\Delta_t < 0$), at the given backing coin price $p_t$, or 
        \item does nothing ($\Delta_t = 0$) and waits until the next timestep.
    \end{itemize}

    The speculator's portfolio is updated to $\mathbf{x}_t = \left(m_{t-1} + \Delta_t, \, n_{t-1} - \frac{\Delta_t}{p_t}\right)$, assuming $\MM$ has sufficient reserves. (If $\MM$'s reserves are insufficient, the speculator receives $R_{t-1} <  \frac{\Delta_t}{p_t}$ backing coins.)

    \item The stablecoin mechanism's reserves change to $R_t = \max\{R_{t-1} + \frac{\Delta_t}{p_t}, 0\}$ in response to the speculator's action.
\end{enumerate}
\end{tcolorbox}

In our instantiation, we assume that the speculator derives intrinsic value from the backing coin and seeks to maximize the number of backing coins it holds. The speculator assigns no intrinsic value to stablecoins, viewing them solely as an instrument for acquiring additional backing coins. Accordingly, the speculator interacts with $\MM$ by buying (minting) and selling (redeeming) stablecoins in exchange for backing coins, with the objective of increasing its backing-coin holdings.

Specifically, at each timestep the speculator aims to maximize its expected future \emph{utility}, where utility is measured as the expected number of backing coins in its portfolio. Suppose the speculator values each stablecoin as equivalent to some quantity $s_1$ of backing coins.\footnote{We define $s_1 = \max_{x \in \mathcal{D}} \big\{ (1-\delta)^\frac{1}{F(x)} \cdot \frac{1}{\mathbb{E}[p \mid p \leq x]} \big\}$, where $F(x)$ denotes the cumulative distribution function for the distribution (known to the speculator) from which backing coin prices are drawn. This formulation captures the speculator's valuation of a stablecoin as the price $x$ that optimally balances its preference for selling at a high price against its willingness to wait for such a price to materialize.}
Then, the speculator's utility (measured in units of the backing coin) for a portfolio $\mathbf{x} = (m, n)$ is given by
\begin{align*}
    u(\mathbf{x}) := \mathbf{x} \cdot \mathbf{s},
\end{align*}
where $\mathbf{s} = (s_1, 1)$ represents the speculator's component-wise valuation of its assets, and $(\cdot)$ denotes the standard dot product.

To model how the speculator chooses the number $\Delta_t$ of stablecoins to buy ($\Delta_t > 0$) or sell ($\Delta_t < 0$), we note that if it chooses to buy or sell $\Delta_t$ stablecoins in timestep $t$, its portfolio becomes 
\begin{equation*}
\mathbf{x_{t-1}} + \mathbf{\Delta_t}(\Delta_t, p_t), \ \ \text{ where } \ \  \mathbf{\Delta_t}(\Delta_t, p_t) := \Delta_t \begin{pmatrix} 1 \\ -1/p_t \end{pmatrix}.
\end{equation*}
(Note: We assume in this section that the mechanism $\mathcal{M}$ does \textit{not} charge transaction fees ($\varepsilon=0$), and so the mint and redeem prices of a stablecoin are both $\mathdollar 1$.) We define as shorthand $\mathbf{\Delta_t^*}(p_t) := \mathbf{\Delta_t}(\Delta_t^*(p_t), p_t)$, where $\Delta_t^*(p_t)$ is the value of $\Delta_t$ which maximizes the speculator's utility at price $p_t$. Since the utility function $u(\mathbf{x_{t-1}} + \mathbf{\Delta_t}(\Delta_t, p_t))$ is linear in $\Delta_t$, the optimal value $\Delta_t^*(p_t)$ is
\begin{itemize}
    \item the highest permissible value of $\Delta_t$ if $\frac{\partial}{\partial\Delta_t} \big(u(\mathbf{x}_{t-1} +\mathbf{\Delta_t}(\Delta_t, p_t)) = s_1 - \frac{1}{p_t} > 0$, 
    \item the lowest permissible value of $\Delta_t$ if $s_1 - \frac{1}{p_t} < 0$, and 
    \item $0$, when $s_1 - \frac{1}{p_t} = 0$. 
\end{itemize}

\paragraph{The feasible interval.} The \emph{permissible} value of $\Delta_t$ above is dictated by the constraints of the problem, as well as the rationality model of the speculator, e.g., its attitude towards risk aversion. For example, since we stipulate that $\mathbf{x}_t \in \mathbb{R}_{\geq 0}^2$ for all $t \in T$, it follows that the permissible values of $\Delta_t$ are in the interval $[-m_{t-1}, \, p_t \, n_{t-1}]$, as the speculator cannot sell more stablecoins than it owns, nor buy more stablecoins than its share of backing coins allows it to buy. Still, the speculator might not be willing to use \emph{all} its funds to act at a given round; while this may be the optimal behavior for a risk-neutral speculator, in practice the speculator would likely only use some fraction of its portfolio at any timestep. To allow for this behavior, rather than change the utility function itself (and thereby lose the analytic advantage of linearity), we externally constrain the size of $\Delta_t$ by defining the \emph{feasible interval} of $\Delta_t$ to be
\begin{equation*}
\mathcal{F}_t = [-(1-\Lambda_1) \, m_{t-1}, \, (1 - \Lambda_2) \, p_t \, n_{t-1}], \ \  \text{ for some fixed constants } \ \  \Lambda_1, \Lambda_2 \in [0, 1].
\end{equation*} 
Larger values of $\Lambda_i$ correlate with a very risk-averse speculator, while smaller values correspond to a more risk-tolerant speculator, with $\Lambda_1 = \Lambda_2 = 0$ representing risk neutrality. (Note that in the main body of the text, we assumed that $\Lambda_1 = \Lambda_2 =: \Lambda$ to simplify notation.)\medskip

\paragraph{The waiting interval.} Working towards a more realistic model for the speculator's behavior, our formulation so far requires the speculator to react, by buying or selling, to the price $p_t$ at every timestep (except in the unlikely case that $s_1 - 1/p_t = 0$). This is because our utility function $u$ only captures the speculator's expected utility at the end of the next timestep; in that case, the speculator would almost always benefit from either buying or selling, and so it always acts. However, the downside to acting (and spending funds) at  every timestep is that if a better price soon comes along, the speculator's ability to act on this better price is limited by its previous action. That is, always reacting to prices, even unexceptional ones, can limit the speculator's ability to take advantage of exceptionally high or low prices. Therefore, the speculator's logic should include some ability to ``look ahead'' and thereby judge for itself whether the current price is worth acting on, or whether it would be better off waiting.

To encode this possibility into our utility function, we assume that the speculator discounts future value at some rate $\delta \in [0, 1)$ (with low $\delta$ corresponding to a patient speculator, and high $\delta$ corresponding to an impatient one). There are many conceivable ways to integrate discounts for future prices into our model, some of which are fairly involved. For our purposes, to make sure our model is amenable to theoretical analyses, we take a relatively simple approach and assume that the speculator's reasoning only extends as far as its next action to buy or sell stablecoins.

Clearly, to reason about future backing coin prices, the speculator's rationality model needs to be equipped with some knowledge about those prices; here we make the standard assumption that this information is in terms of an underlying distribution. In particular, the speculator assumes that the price of the backing coin is drawn independently\footnote{This assumption is crucial to allow us to derive analytical solutions, but we relax it in our simulations and show that the same general patterns of behavior hold in both cases.}
at random according to some distribution $\mathbb{D}$ with probability density function $f(p)$ and cumulative distribution function $F(p) = \int_{-\infty}^p f(x) \, dx$. 
Under these assumptions, the speculator's expected utility from waiting at timestep $t$ and selling $\Delta_{t'}^*(p_{t'})$ stablecoins at some future backing coin price $p_{t'} \leq \frac{1}{s_1}$ is as follows:
\begin{align*}
    u\left(\mathbf{x_{t-1}} + (1-\delta)^{\frac{1}{F(p_{t'})}}\mathbf{\Delta_{t'}^*}(\mathbb{E}[p \, | \, p \leq p_{t'}])\right).
\end{align*}
If $p_t < \frac{1}{s_1}$, then the speculator will not sell stablecoins unless $p_t$ satisfies 

\begin{equation}
\label{eqn:x1}
u\left(\mathbf{x_{t-1}} + \mathbf{\Delta_{t}^*}(p_t)\right) \geq \max_{p_{t'} \leq \frac{1}{s_1}} \left\{ u(\mathbf{x_{t-1}} + (1-\delta)^{\frac{1}{F(p_{t'})}}\mathbf{\Delta_{t'}^*}(\mathbb{E}[p \, | \, p \leq p_{t'}])) \right\};
\end{equation}
 i.e., selling now yields a higher expected utility than what can be expected by waiting. We define $y_1$ to be the value of $p_{t'}$ at which the two expressions are equal.

Similarly, the speculator's expected utility from waiting at timestep $t$ and buying
$\Delta_{t'}^*(p_{t'})$ stablecoins at some future backing coin price $p_{t'} \geq \frac{1}{s_1}$ is 
\begin{align*}
    u\left(\mathbf{x_{t-1}} + (1-\delta)^{\frac{1}{1 - F(p_{t'})}}\mathbf{\Delta_{t'}^*}(\mathbb{E}[p \, | \, p \geq p_{t'}])\right).
\end{align*}
As in the case of selling, if $p_t > \frac{1}{s_1}$, then the speculator will not buy stablecoins unless $p_t$ satisfies 
\begin{equation}
    \label{eqn:x2}
    u\left(\mathbf{x_{t-1}} + \mathbf{\Delta_{t}^*}(p_t)\right) \geq \max_{p_{t'} \geq \frac{1}{s_1}} \left\{ u(\mathbf{x_{t-1}} + (1-\delta)^{\frac{1}{1-F(p_{t'})}}\mathbf{\Delta_{t'}^*}(\mathbb{E}[p \, | \, p \geq p_{t'}])) \right\};
\end{equation} 
i.e., buying now yields a higher expected utility than waiting. Let $y_2$ be the value of $p_{t'}$ at which the two expressions are equal. We refer to $[y_1, y_2]$ as the speculator's \textit{waiting interval} because any prices $p_t$ which fall within the interval cause the speculator to wait, while any values outside this range cause the speculator to act.  The following lemma establishes that the size of the waiting interval is always nonnegative, and identifies the only case for which it can be zero. 

\begin{lemma}
\label{Lemma:interval-no-negative}
   Let $x_1$ and $x_2$ be the prices that maximize the right-hand sides of (\ref{eqn:x1}) and (\ref{eqn:x2}) respectively. The size of the waiting interval is always nonnegative, and is of size zero if and only if $\mathbb{E}[p \, | \, p \geq x_2]= \mathbb{E}[p \, | \, p \leq x_1]$.
\end{lemma}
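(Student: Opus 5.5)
Our plan is to compute $y_1$ and $y_2$ explicitly from the indifference conditions (\ref{eqn:x1}) and (\ref{eqn:x2}). Then we compare each of them with the pivot price $\frac{1}{s_1}$.

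First, we use linearity of $u$. In the selling regime $p_t < \frac{1}{s_1}$, the optimal trade is the lowest feasible value, $\Delta^* = -(1-\Lambda_1)m_{t-1}$. This quantity does not depend on the price, so the common term $u(\mathbf{x}_{t-1})$ cancels from both sides of (\ref{eqn:x1}). After dividing by the positive constant $(1-\Lambda_1)m_{t-1}$ (the case $m_{t-1}=0$ is degenerate and makes both sides equal), the utility change of a sale at price $q$ is proportional to $\frac{1}{q} - s_1$. Hence (\ref{eqn:x1}) reads
\[
\frac{1}{p_t} - s_1 \;\geq\; (1-\delta)^{1/F(x_1)}\Big(\frac{1}{\mathbb{E}[p\mid p\le x_1]} - s_1\Big).
\]
Solving for equality gives $y_1$ explicitly as $\frac{1}{y_1} = s_1 + (1-\delta)^{1/F(x_1)}\big(\frac{1}{E_1} - s_1\big)$, where $E_1 := \mathbb{E}[p\mid p\le x_1]$. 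The buying side works the same way. There $\Delta^* = (1-\Lambda_2)p_t n_{t-1}$ depends on the price, so the gain from a purchase at price $q$ is proportional to $q s_1 - 1$. Setting $E_2 := \mathbb{E}[p\mid p\ge x_2]$, this gives $y_2 s_1 - 1 = (1-\delta)^{1/(1-F(x_2))}(E_2 s_1 - 1)$.

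Next, we show $y_1 \le \frac{1}{s_1} \le y_2$. Since $x_1 \le \frac{1}{s_1}$, we have $E_1 \le x_1 \le \frac{1}{s_1}$, so the bracket $\frac{1}{E_1}-s_1$ is nonnegative. The same argument gives $E_2 \ge \frac{1}{s_1}$. The discount factors lie in $(0,1]$ because $\delta\in[0,1)$ and $F\in(0,1]$. Therefore $\frac{1}{y_1} \ge s_1$ and $y_2 s_1 \ge 1$, which yields $y_1 \le \frac{1}{s_1} \le y_2$. This proves nonnegativity of the waiting interval $y_2-y_1$.

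For the equality case, $y_2 = y_1$ forces $y_1 = y_2 = \frac{1}{s_1}$. For $\delta>0$ the discount factor is strictly less than $1$ but positive, so this in turn forces both brackets to vanish, that is, $E_1 = \frac{1}{s_1} = E_2$. Conversely, suppose $E_1 = E_2$. Because $E_1 \le \frac{1}{s_1} \le E_2$, both equal $\frac{1}{s_1}$, the brackets vanish, and so $y_1 = y_2 = \frac{1}{s_1}$.

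The main obstacle is justifying the structural facts these steps rely on. The first is that the maximizers satisfy $x_1\le\frac{1}{s_1}\le x_2$; this holds by the ranges of the maxima in (\ref{eqn:x1}) and (\ref{eqn:x2}). The second is the degenerate cases $\delta=0$ and $F(x_1)\in\{0,1\}$, where the discount factor can equal $1$. I would try to handle these by invoking the definition of $s_1$ in the footnote. That definition gives $\frac{1}{s_1}\ge E_1$ in general, and it pins down $x_1$ as the maximizer. In the boundary case I would argue that the equality conclusion still follows, since $E_1=E_2$ is equivalent to the conditional means coinciding with $\frac{1}{s_1}$. If the $\delta=0$ case breaks strict equivalence, I would state the lemma under $\delta\in(0,1)$ and remark on it.
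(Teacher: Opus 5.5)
Your proposal is correct and follows essentially the same route as the paper: you solve the two indifference conditions for $y_1$ and $y_2$ (your formulas agree with the paper's), use $x_1\le \tfrac{1}{s_1}\le x_2$ to show that $\tfrac{1}{s_1}-y_1$ and $y_2-\tfrac{1}{s_1}$ are each nonnegative, and read off the equality case. Your concern about $\delta=0$ is unfounded: the equality step needs only that the discount factors $(1-\delta)^{1/F(x_1)}$ and $(1-\delta)^{1/(1-F(x_2))}$ are nonzero, not that they are below $1$, and this holds for every $\delta\in[0,1)$ (the paper notes that the only exceptional value is $\delta=1$), so you need not restrict to $\delta\in(0,1)$.
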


\begin{proof}%[Proof of \cref{Lemma:interval-no-negative}]
    Let $x_1$ and $x_2$ be the values which maximize the right-hand sides of (\ref{eqn:x1}) and (\ref{eqn:x2}), respectively. To ensure that these values are well-defined, we can simply restrict the range of our distribution to a closed interval. By assumption, $x_1 \leq \frac{1}{s_1} \leq x_2$. Since $y_1$ and $y_2$ are the endpoints of the waiting interval, we can solve for them simply by computing the prices at which the speculator is indifferent between selling stablecoins and waiting (for $y_1$) and buying stablecoins and waiting (for $y_2$).
    
    If the speculator is indifferent between selling and waiting, then $p_t$ satisfies the following, where $x_1$ is the value of $p$ which maximizes the right-hand side:

    \begin{align*}
        u(\mathbf{x_{t-1}} + \mathbf{\Delta_{t}^*}(p_t)) &=   u(\mathbf{x_{t-1}} + (1-\delta)^{\frac{1}{F(x_{1})}}\mathbf{\Delta_{t}^*}(\mathbb{E}[p \, | \, p \leq x_{1}]));
    \end{align*}
    i.e., $y_1 = \bigg( (1-(1-\delta)^\frac{1}{F(x_1)}) \, s_1 + (1-\delta)^\frac{1}{F(x_1)} \frac{1}{\mathbb{E}[p \, | \, p \leq x_1]} \bigg)^{-1}$
    Similarly, we compute that $y_2 = (1-\delta)^{\frac{1}{1 - F(x_2)}} \mathbb{E}[p \, | \, p \geq x_2] + (1 - (1-\delta)^{\frac{1}{1 - F(x_2)}}) \frac{1}{s_1})$. We compute $y_2 - y_1$ as $(y_2 - \frac{1}{s_1}) + (\frac{1}{s_1} - y_1)$ to get:
    \begin{align*}
        y_2 - y_1 &= \bigg(y_2 - \frac{1}{s_1}\bigg) + \bigg(\frac{1}{s_1} - y_1\bigg) \\
        &= \bigg( (1 - \delta)^{\frac{1}{1 - F(x_2)}} \bigg(\mathbb{E}[p \, | \, p \geq x_2] - \frac{1}{s_1} \bigg) \bigg) \\
        &+ \bigg( \frac{y_1}{\mathbb{E}[p \, | \, p \leq x_1]} (1 - \delta)^{\frac{1}{F(x_1)}} \bigg(\frac{1}{s_1} - \mathbb{E}[p \, | \, p \leq x_1] \bigg) \bigg)
    \end{align*}
    Since we define $x_1 \leq 1/s_1 \leq x_2$, both terms are nonnegative, and so their sum can only be zero if both terms are zero. From the expression above, we see that this is only true if $\delta = 1$ or $\mathbb{E}[p \, | \, p \leq x_1] = \mathbb{E}[p \, | \, p \geq x_2] = 1/s_1$. By assumption, $\delta \in [0, 1)$; and so $y_2 - y_1 = 0$ if and only if $\frac{\mathbb{E}[p \, | \, p \geq x_2]}{ \mathbb{E}[p \, | \, p \leq x_1]} = 1$.
\end{proof}

\paragraph{The speculator's behavior.} Based on all the above, the speculator's choice for the quantity $\Delta_t$, therefore, can be captured by the following formula:
\begin{align}
    \Delta_{t}^*(p_{t}) = \begin{cases}
        (1 - \Lambda_2) \, p_t \, n_{t-1}, & p_t > y_2, \\
        0, & y_1 \leq p_t \leq y_2, \\
        -(1 - \Lambda_1) \, m_{t-1}, & p_t < y_1.
    \end{cases}
\end{align}

Note that $u$ is still linear, and hence the speculator at timestep $t$ will either buy or sell as many stablecoins as possible, as long as this number is within the feasible interval, and the price of the backing asset $p_t$ is not within the waiting interval; in the latter case, the speculator will take no action and will reevaluate at timestep $t+1$, with new price $p_{t+1}$.

\section{Computing the speculator's expected portfolio}\label{app:expected-portfolio}

Assume that the price of the backing coin is drawn independently at random from distribution $\mathbb{D}$ at each timestep $t \in T$. We define a ``\emph{round}'' over a sequence of prices as follows: 
suppose the most recent action the speculator took (waiting notwithstanding) was to sell some number of stablecoins. Let $t_a$ represent the next time the speculator \textit{buys} some stablecoins. From here, the speculator, in response to the subsequent prices, might buy some more stablecoins or wait for some more timesteps without buying or selling; but eventually, the speculator will also sell some stablecoins. Let $t_b$ be the first time this occurs, following $t_a$. Let $t_c$ be the last time the speculator sells some stablecoins following $t_b$, before subsequently buying stablecoins. The ``round'' is defined as the sequence of timesteps $(p_t)_{t \in \{t_a, t_{a+1}, \ldots, t_c\}}$. Note that we define a round like so, because using stablecoins to attain more backing coins is a two-step process: The speculator must first buy stablecoins to be able to sell them later, at a profit.

Since the price in each timestep is independent, the speculator's expected profit per round is fixed, as is the expected length of each round. Therefore, to analyze the expected value of the speculator's portfolio over time, we simply need to quantify the speculator's expected profit per round. We first explain how to do that for a single round, and then devise a general formula.

\subsection*{Computing the speculator's expected portfolio after one round}
We first compute the expected change in the speculator's portfolio between $t_{a-1}$ and $t_{b-1}$. During this interval, the speculator is only buying stablecoins (or waiting). Suppose the speculator buys at exactly $i \leq b - a$ timesteps. The impact of a buy operation at price $p \geq y_2$ on the speculator's portfolio can be expressed as $A(p) \, \mathbf{x}$, where  
\begin{equation*}
A(p) := \begin{pmatrix} 1 & (1-\Lambda_1) \, p  \\ 0 & \Lambda_1 \end{pmatrix}.
\end{equation*}
Therefore, the expected value of the speculator's portfolio at $t_{b-1}$ with respect to its original portfolio $\mathbf{x}_{t_{a-1}}$ at time $t_{a-1}$ is as follows:
\begin{align*}
    \mathbb{E}[\mathbf{x}_{t_{b-1}}] &= A(\mathbb{E}[p \, | \, p \geq y_2])^i  \mathbf{x}_{t_{a-1}} = \begin{pmatrix} 1 & (1-\Lambda_1^i) \, \mathbb{E}[p \, | \, p \geq y_2] \\ 0 & \Lambda_1^i \end{pmatrix} \mathbf{x}_{t_{a-1}}.
\end{align*}
Similarly, the impact of selling at some price $p \leq y_1$ on the speculator's portfolio can be written as $B(p) \, \mathbf{x}$, where 
\begin{equation*}
    B(p) = \begin{pmatrix} \Lambda_2 & 0 \\ (1-\Lambda_2) \frac{1}{p} & 1 \end{pmatrix},
\end{equation*}
and so the expected value of the speculator's portfolio at $t_{c}$ after selling stablecoins at exactly $j$ timesteps between $t_b$ and $t_c$ is
\begin{align*}
    \mathbb{E}[\mathbf{x}_{t_{c}}] &= B(\mathbb{E}[p \, | \, p \leq y_1])^j \mathbf{x}_{t_{b-1}} =  \begin{pmatrix} \Lambda_2^j & 0 \\ (1-\Lambda_2^j) \, \frac{1}{\mathbb{E}[p \, | \, p \leq y_1]} & 1 \end{pmatrix} \mathbf{x}_{t_{b-1}}.
\end{align*}
So in total, the expected value of the speculator's portfolio at the end of the round is
\begin{align*}
    \mathbb{E}[\mathbf{x}_{t_c}] &= B(\mathbb{E}[p \, | \, p \leq y_1])^j A(\mathbb{E}[p \, | \, p \geq y_2])^i \mathbf{x}_{t_{a-1}} \\
    &= \begin{pmatrix} \Lambda_2^j & \Lambda_2^j (1 - \Lambda_1^i)  \mathbb{E}[p \, | \, p \geq y_2] \\  (1-\Lambda_2^j) \frac{1}{\mathbb{E}[p \, | \, p \leq y_1]} & \Lambda_1^i + (1-\Lambda_1^i)(1-\Lambda_2^j) \frac{\mathbb{E}[p \, | \, p \geq y_2]}{\mathbb{E}[p \, | \, p \leq y_1]} \end{pmatrix} \mathbf{x}_{t_{a-1}} =: M \mathbf{x}_{t_{a-1}}.
\end{align*}
Thus, we have obtained a recursive formula that describes the expected value of the speculator's portfolio at the end of the round as a function of its portfolio at the start of the round. Below we explore how the eigenvalues and vectors of $M$ can be used to derive a general formula for the expected value of the speculator's portfolio, dependent only on the speculator's initial state $\mathbf{x_0}$. 

\subsection*{A general formula for the speculator's expected portfolio}

Let us define
\begin{align}
    \label{Def:M}
    M = \begin{pmatrix} A & B \\ C & D \end{pmatrix} = \begin{pmatrix} \Lambda_2^j & \Lambda_2^j (1- \Lambda_1^i) \mathbb{E}[p \, | \, p \geq y_2] \\ (1 - \Lambda_2^j) \frac{1}{\mathbb{E}[p \, | \, p \leq y_1]} & \Lambda_1^i + (1-\Lambda_1^i)(1 - \Lambda_2^j) \, Y \end{pmatrix},
\end{align}
where $Y := \frac{\mathbb{E}[p \, | \, p \geq y_2]}{\mathbb{E}[p \, | \, p \leq y_1]}$. \medskip 

First, we prove two technical auxiliary results, \vref{Lemma}{Lemma:LB-on-R} and \vref{Corollary}{Cor:A+D<=2}, which are needed in our subsequent proofs. With those at hand, we can show \vref{Lemma}{Lemma:eigenvalues}, which establishes the existence and values of the eigenvalues and eigenvectors of matrix $M$, and \vref{Lemma}{Lemma:general-formula}, which shows how we can use these values to write a general formula for the speculator's portfolio at the end of round $k$.

\begin{lemma}
    \label{Lemma:LB-on-R}
    Let $M$ be as defined in \vref{Equation}{Def:M}. Then 
    \begin{align}
        \label{Def:R}
        R := \sqrt{(A-D)^2 + 4BC}
    \end{align}
    is lower-bounded by $|A + D - 2|$, 
    with equality holding if and only if any of $Y, \Lambda_1^i, \Lambda_2^j$ equals $1$.
\end{lemma}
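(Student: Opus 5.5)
The plan is to compare squares. Since $|A+D-2|\ge 0$, it suffices to show $R^2=(A-D)^2+4BC\ge (A+D-2)^2$, with equality exactly in the stated cases. This also shows that the radicand defining $R$ is nonnegative, so $R$ is real. Taking square roots then gives $R\ge|A+D-2|$ directly, and the equality cases carry over unchanged.

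First I apply the difference of squares to the first terms:
\begin{align*}
(A-D)^2-(A+D-2)^2=(2-2D)(2A-2)=-4(1-A)(1-D),
\end{align*}
so that
\begin{align*}
R^2-(A+D-2)^2=4\bigl[BC-(1-A)(1-D)\bigr].
\end{align*}
Next I substitute the entries of $M$ from \vref{Equation}{Def:M}. Write $a:=\Lambda_2^j$ and $b:=\Lambda_1^i$, and use $Y=\mathbb{E}[p\,|\,p\ge y_2]/\mathbb{E}[p\,|\,p\le y_1]$. The conditional expectations cancel in the product $BC$, giving $BC=a(1-a)(1-b)Y$. For the other factors, $1-A=1-a$ and $1-D=(1-b)\bigl(1-(1-a)Y\bigr)$. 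Hence
\begin{align*}
BC-(1-A)(1-D)=(1-a)(1-b)\bigl[aY-1+(1-a)Y\bigr]=(1-a)(1-b)(Y-1).
\end{align*}

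The remaining step is a sign argument. Since $\Lambda_1,\Lambda_2\in[0,1]$, both $1-a$ and $1-b$ are nonnegative. For the last factor, the speculator sells only at prices $p\le y_1$ and buys only at $p\ge y_2$. By \vref{Lemma}{Lemma:interval-no-negative} we have $y_1\le y_2$, and prices are nonnegative. Therefore $\mathbb{E}[p\,|\,p\le y_1]\le \mathbb{E}[p\,|\,p\ge y_2]$, so $Y\ge 1$, provided the denominator is positive.

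The one point needing care is exactly that positivity: $\mathbb{E}[p\,|\,p\le y_1]>0$. I will take it as implicit, since $Y$ and the entry $C$ of $M$ are only defined under it. With that, the product $(1-a)(1-b)(Y-1)$ is nonnegative. It vanishes if and only if $\Lambda_2^j=1$, $\Lambda_1^i=1$, or $Y=1$, which is precisely the stated equality condition.
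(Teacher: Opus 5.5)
Your proof is correct and is essentially the paper's argument. The paper rewrites $R^2=(A+D)^2-4\Lambda_1^i\Lambda_2^j$ via $BC=AD-\Lambda_1^i\Lambda_2^j$ and then bounds $\Lambda_1^i\Lambda_2^j\le A+D-1$, which amounts to your identity $R^2-(A+D-2)^2=4(1-\Lambda_1^i)(1-\Lambda_2^j)(Y-1)$. The only difference is that the paper simply assumes $Y\ge 1$, whereas you derive it from $y_1\le y_2$, given that $\mathbb{E}[p\,|\,p\le y_1]>0$ (a condition you correctly flag as needed).
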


\begin{proof}%[Proof of \ref{Lemma:LB-on-R}]
    By the definition of $M$, we note that $BC$ can be rewritten as
     \begin{align}
         \label{Eqn:BC}
         BC = \Lambda_2^j(1-\Lambda_1^i)(1-\Lambda_2^j)Y = \Lambda_2^j(\Lambda_1^i + (1-\Lambda_1^i)(1-\Lambda_2^j)Y) - \Lambda_1^i\Lambda_2^j = AD - \Lambda_1^i\Lambda_2^j,
     \end{align} 
     which implies
     \begin{align*}
         R &= \sqrt{(A-D)^2 + 4BC} = \sqrt{(A-D)^2 + 4AD - 4\Lambda_1^i\Lambda_2^j} = \sqrt{(A+D)^2 - 4\Lambda_1^i\Lambda_2^j}.
     \end{align*}
    The assumption that $Y \geq 1$ allows us to upper bound $\Lambda_1^i \Lambda_2^j$ as follows:
     \begin{align*}
         \Lambda_1^i \Lambda_2^j &= (1-\Lambda_1^i)(1-\Lambda_2^j) - (1 - \Lambda_1^i - \Lambda_2^j) \\
         &\leq (1-\Lambda_1^i)(1-\Lambda_2^j)Y - (1 - \Lambda_1^i - \Lambda_2^j) \\
         &= A + D - 1,
     \end{align*}
     giving us 
     \begin{align*}
         R &= \sqrt{(A+D)^2 - 4\Lambda_1^i\Lambda_2^j} \geq \sqrt{(A+D)^2 - 4(A + D - 1)} = \sqrt{(A+D - 2)^2} \\
         &= |A + D - 2|.
     \end{align*}
     To complete the argument, we observe that the inequality in the expansion of $\Lambda_1^i\Lambda_2^j$ above becomes equality if and only if $(1 - \Lambda_1^i)(1 - \Lambda_2^j) = (1 - \Lambda_1^i)(1 - \Lambda_2^j)Y$, which occurs when either $Y=1$, or at least one of $\Lambda_1^i, \Lambda_2^j = 1$. Thus, when any of these conditions holds, $R = | A + D - 2|$.
 \end{proof}

\begin{corollary}
    \label{Cor:A+D<=2}
    If $R = |A + D - 2|$, then $A + D \leq 2$, with $A + D = 2$ (and hence, $R = 0$) if and only if $\Lambda_1^i = \Lambda_2^j = 1$.
\end{corollary}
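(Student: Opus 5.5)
We work directly from the definitions of $A$ and $D$ in \vref{Equation}{Def:M} and the equality characterization in \vref{Lemma}{Lemma:LB-on-R}. By that lemma, $R = |A+D-2|$ holds exactly when $Y=1$, $\Lambda_1^i=1$, or $\Lambda_2^j=1$. So we split into these three cases and show in each that $A+D\le 2$, then identify when $A+D=2$.

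\textbf{Computing $A+D$.} We have $A + D = \Lambda_2^j + \Lambda_1^i + (1-\Lambda_1^i)(1-\Lambda_2^j)Y$. The three cases go as follows.
\begin{itemize}
\item If $Y=1$, then $A+D = \Lambda_1^i + \Lambda_2^j + 1 - \Lambda_1^i - \Lambda_2^j + \Lambda_1^i\Lambda_2^j = 1 + \Lambda_1^i\Lambda_2^j$. Since $\Lambda_1,\Lambda_2\in[0,1]$, this is at most $2$, with equality iff $\Lambda_1^i\Lambda_2^j = 1$, i.e.\ iff $\Lambda_1^i=\Lambda_2^j=1$.
\item If $\Lambda_1^i = 1$, the $Y$-term vanishes and $A+D = 1 + \Lambda_2^j \le 2$. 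Equality holds iff $\Lambda_2^j=1$.
\item If $\Lambda_2^j=1$, symmetrically $A+D = 1+\Lambda_1^i\le 2$, with equality iff $\Lambda_1^i=1$.
\end{itemize}
In every case, equality $A+D=2$ forces $\Lambda_1^i=\Lambda_2^j=1$.

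\textbf{The converse.} If $\Lambda_1^i=\Lambda_2^j=1$, then $A=1$ and $D=1$ (the $Y$-term again vanishes), so $A+D=2$. Then $R=|A+D-2|=0$, which also follows from $R=\sqrt{(A+D)^2-4\Lambda_1^i\Lambda_2^j}=\sqrt{4-4}$ as computed in the proof of \vref{Lemma}{Lemma:LB-on-R}.

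The only subtle point is that the case split is exhaustive. It relies on the ``only if'' direction of the equality characterization in \vref{Lemma}{Lemma:LB-on-R}, together with the standing assumption $Y\ge 1$ used there. Here some care is needed: when $A+D<2$, the equality $R=|A+D-2|$ could in principle arise from the square-root identity without the lemma's inequality being tight. However, the lemma's chain $R\ge\sqrt{(A+D-2)^2}$ holds with equality iff $\Lambda_1^i\Lambda_2^j = A+D-1$, which is exactly the tightness condition treated there. So invoking the lemma's characterization is legitimate, and no further obstacle is expected.
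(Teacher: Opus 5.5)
Your proof is correct and follows essentially the same route as the paper. Both use the equality case of Lemma~\ref{Lemma:LB-on-R} to reduce to $Y=1$, $\Lambda_1^i=1$ or $\Lambda_2^j=1$, then compute $A+D$ as $1+\Lambda_1^i\Lambda_2^j$, $1+\Lambda_2^j$ or $1+\Lambda_1^i$ to get $A+D\le 2$ with equality exactly when $\Lambda_1^i=\Lambda_2^j=1$. Your closing caveat is resolved more simply: since both sides are nonnegative, $R=|A+D-2|$ is equivalent to $R^2-(A+D-2)^2=4(1-\Lambda_1^i)(1-\Lambda_2^j)(Y-1)=0$, which does not even need $Y\ge 1$.
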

\begin{proof}%[Proof of \ref{Cor:A+D<=2}]
    According to \vref{Lemma}{Lemma:LB-on-R}, $R = |A + D - 2|$ if and only if $Y = 1$ or $\Lambda_1^i = 1$ or $\Lambda_2^j = 1$. But if any of these statements is true, then we must have $A + D \leq 2$, since
    \begin{align*}
        A + D = \Lambda_1^i + \Lambda_2^j + (1 - \Lambda_1^i)(1 - \Lambda_2^j)Y = 
        \begin{cases}
            1 + \Lambda_1^i\Lambda_2^j, & Y = 1,
            \\ \Lambda_i^i + \Lambda_2^j, & \Lambda_1^i = 1 \text{ or } \Lambda_2^j = 1,
        \end{cases}
    \end{align*}
    which, in either case, is never greater than 2, and equals 2 precisely when $\Lambda_1^i = \Lambda_2^j = 1$.
\end{proof}

We are now ready to define the following quantities, which will be used in our formula for the speculator's portfolio after $k$ rounds. \vref{Lemma}{Lemma:eigenvalues} below establishes that these are eigenvalues (with respective eigenvectors) of the matrix $M$ as defined in \vref{Equation}{Def:M}.

% \begin{tcolorbox}[colframe=gray,colback=white,boxrule=2pt,arc=.3em,boxsep=2mm, left=-5pt,right=8pt,top=-5pt,bottom=5pt]
\begin{align}
        \label{Def:a1-a2} 
        a_1 &:= \tfrac12 ( A + D + R), & a_2 &:= \tfrac12 ( A + D - R) \\[10pt]
        \label{Def:c1-c2}
        \mathbf{c_1} &:= \frac{1}{R} 
        \begin{pmatrix}
            \tfrac12 (R + (A - D)) \;\; \| \;\; B \\
            C \;\; \| \;\; \tfrac12 (R - (A - D))
        \end{pmatrix}
        \mathbf{x_0}, 
        & \mathbf{c_2} &:= \frac{1}{R} 
        \begin{pmatrix}
            \tfrac12 (R - (A - D)) \;\; \| \;\; - B  \\
            - C \;\; \| \;\; \tfrac12 (R + (A - D)) 
        \end{pmatrix}
        \mathbf{x_0}.
    \end{align}
% \end{tcolorbox}

\begin{lemma}
    \label{Lemma:eigenvalues}
    The matrix $M$ has two distinct, real eigenvalues, $a_1$ and $a_2$, with respective corresponding eigenvectors $\mathbf{c_1}$ and $\mathbf{c_2}$ given by \vreftwo{Equations}{Def:a1-a2}{Def:c1-c2} if and only if $\Lambda_1^i$ and $\Lambda_2^j$ are not both 1. Moreover, if $\Lambda_1^i \Lambda_2^j \neq 0$, then neither eigenvalue is trivial.
\end{lemma}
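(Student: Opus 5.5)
The plan is to work from the characteristic polynomial of the $2\times 2$ matrix $M$ of \vref{Equation}{Def:M}, namely $\lambda^2 - (A+D)\lambda + (AD - BC)$. Its roots are $\tfrac12(A+D \pm \sqrt{(A+D)^2 - 4(AD-BC)})$. Since $(A+D)^2 - 4(AD-BC) = (A-D)^2 + 4BC$, these roots are exactly $a_1$ and $a_2$ from \vref{Equation}{Def:a1-a2}, with $R$ as in \vref{Equation}{Def:R}. The roots are real because every entry of $M$ is nonnegative ($\Lambda_1,\Lambda_2\in[0,1]$, positive conditional expectations). Hence $BC \ge 0$ and the discriminant is nonnegative.

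Next, distinctness amounts to $R \neq 0$. I will use \vref{Equation}{Eqn:BC}, which gives $R^2 = (A+D)^2 - 4\Lambda_1^i\Lambda_2^j$, together with \vref{Lemma}{Lemma:LB-on-R}, which gives $R \ge |A+D-2|$.
\begin{itemize}
\item If $R = 0$, then $|A+D-2| = 0$, so equality holds in the lemma. \vref{Corollary}{Cor:A+D<=2} then forces $\Lambda_1^i = \Lambda_2^j = 1$.
\item Conversely, if $\Lambda_1^i = \Lambda_2^j = 1$, then $A = D = 1$ and $B = C = 0$, so $M = I$ and $R = 0$.
\end{itemize}
In that degenerate case there is a single eigenvalue, and $\mathbf{c_1},\mathbf{c_2}$ are undefined because they divide by $R$. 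This settles the ``if and only if''.

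For the eigenvectors, I will read \vref{Equation}{Def:c1-c2} as $\mathbf{c_1} = P_1\mathbf{x_0}$ and $\mathbf{c_2} = P_2\mathbf{x_0}$. Here $P_1 = (M - a_2 I)/R$ and $P_2 = (a_1 I - M)/R$ are the spectral projectors. Checking entries:
\begin{itemize}
\item the $(1,1)$ entry of $M - a_2 I$ is $A - \tfrac12(A+D-R) = \tfrac12(R + (A-D))$;
\item the off-diagonal entries are $B$ and $C$;
\item the $(2,2)$ entry is $\tfrac12(R - (A-D))$.
\end{itemize}
This matches the displayed matrices. By Cayley–Hamilton, $(M - a_1 I)(M - a_2 I) = 0$, so $M\mathbf{c_1} = a_1\mathbf{c_1}$ and likewise $M\mathbf{c_2} = a_2\mathbf{c_2}$. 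The same identity shows $\mathbf{c_1} + \mathbf{c_2} = \mathbf{x_0}$, which is what the later general formula needs. One caveat: the claim that $\mathbf{c_i}$ is an eigenvector is only literally true when $\mathbf{c_i} \neq 0$, so I will state it as "$\mathbf{c_i}$ lies in the $a_i$-eigenspace."

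Finally, for nontriviality, the product of the eigenvalues is $a_1 a_2 = \det M = AD - BC = \Lambda_1^i\Lambda_2^j$ by \vref{Equation}{Eqn:BC}. So if $\Lambda_1^i\Lambda_2^j \neq 0$, neither eigenvalue is zero.

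I expect the main obstacle to be the precise reading of the compact $\|$ notation in \vref{Equation}{Def:c1-c2}, and recognizing that it encodes the projectors. Once that is done, everything reduces to Cayley–Hamilton and the earlier lemma and corollary.
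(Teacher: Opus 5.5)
Your proof is correct and has the same skeleton as the paper's. Distinctness is reduced to $R>0$ and settled through the bound $R\ge|A+D-2|$ of Lemma~\ref{Lemma:LB-on-R} and its equality corollary. Nontriviality rests on the identity $AD-BC=\Lambda_1^i\Lambda_2^j$.

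The real difference is in the eigenvector step. The paper multiplies out $M\mathbf{c_1}$ entry by entry, substitutes $(A-D)^2+4BC=R^2$, and then calls the $\mathbf{c_2}$ case ``nearly identical.'' You instead recognize the displayed matrices as the spectral projectors $(M-a_2I)/R$ and $(a_1I-M)/R$ and let Cayley--Hamilton do the work. Your route gains several things:
\begin{itemize}
\item it is shorter and handles both eigenvectors symmetrically;
\item it obtains $a_1,a_2$ as \emph{the} roots of the characteristic polynomial, so reality is explicit via $BC\ge 0$;
\item it gives $\mathbf{c_1}+\mathbf{c_2}=\mathbf{x_0}$ for free from $a_1-a_2=R$, a fact the paper merely asserts in the base case of Lemma~\ref{Lemma:general-formula};
\item your caveat that some $\mathbf{c_i}$ can vanish for special $\mathbf{x_0}$ is a precision the paper lacks.
\end{itemize}

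Likewise, your observation $a_1a_2=\det M=\Lambda_1^i\Lambda_2^j$ handles nontriviality of both eigenvalues in one line. The paper instead argues $a_1>0$ from $R>0$ and separately shows $a_2=0\iff R=A+D\iff\Lambda_1^i\Lambda_2^j=0$, which is the same identity in another form. The paper's route buys only that it is purely hands-on algebra, with no appeal to the characteristic polynomial.
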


\begin{proof}%[Proof of \ref{Lemma:eigenvalues}]
    \vref{Equation}{Def:c1-c2} is well-defined if and only if $R > 0$, which, by \vref{Corollary}{Cor:A+D<=2}, is true precisely when $\Lambda_1^i$ and $\Lambda_2^j$ are not both 1. By \vref{Equation}{Def:a1-a2}, this also implies that $a_1, a_2$ are distinct whenever $R \neq 0$; and, moreover, that $a_1$ is nontrivial.\medskip 

    We define an eigenvalue, eigenvector pair $a, \mathbf{c}$ for $M$ as a nontrivial scalar and vector satisfying $M\mathbf{c} = a\mathbf{c}$. We show that the pair $a_1, \mathbf{c_1}$ satisfies this property; the analogous calculation for $a_2, \mathbf{c_2}$ is nearly identical. 
    \begin{small}
    \begin{align*}
        M \mathbf{c_1} &=
        \frac{1}{R}
        \begin{pmatrix}
            A & B \\ C & D
        \end{pmatrix}
        \begin{pmatrix}
            \tfrac12(R + (A - D)) \;\; \| \;\; B  \\
            C \;\; \| \;\; \tfrac12 (R - (A - D)) 
        \end{pmatrix} 
        \mathbf{x_0} \\
        &=
        \frac{1}{2R}
        \begin{pmatrix}
            \tfrac{1}{2}(2A(R + (A - D)) + 4BC) \;\; \| \;\; 2B(A + \tfrac{1}{2}(R - (A - D))) \\
            2C(D + \tfrac{1}{2}(R + (A - D))) \;\; \| \;\; \tfrac{1}{2}(2D(R - (A - D)) + 4BC) \\
        \end{pmatrix} 
        \mathbf{x_0} \\ \\
        &=
        \frac{1}{2R}
        \begin{pmatrix}
            \tfrac{1}{2}(((A+D) + (A-D))(R + (A - D)) + 4BC) \;\; \| \;\;  B(A + D + R) \\
            C(A + D + R) \;\; \| \;\; \tfrac{1}{2}(((A+D) - (A-D))(R - (A - D)) + 4BC) \\
        \end{pmatrix}
        \mathbf{x_0} \\ \\
        &=
        \frac{1}{2R}
        \begin{pmatrix}
            \tfrac{1}{2}((A+D)(R + (A - D)) + 4BC + (A-D)^2 + (A-D)R) \;\;\| \;\;  B(A + D + R) \\
            C(A + D + R) \;\; \| \;\;  \tfrac{1}{2}((A+D)(R - (A - D)) + 4BC + (A-D)^2 - (A-D)R) \\
        \end{pmatrix}
        \mathbf{x_0} \\ \\
        &=
        \frac{1}{2R}
        \begin{pmatrix}
            \tfrac{1}{2}((A+D)(R + (A - D)) + R^2 + (A-D)R) \;\; \| \;\;  B(A + D + R) \\
            C(A + D + R) \;\; \| \;\; \tfrac{1}{2}((A+D)(R - (A - D)) + R^2 - (A-D)R) \\
        \end{pmatrix}
        \mathbf{x_0} \\ \\
        &=
        \frac{1}{2R}
        \begin{pmatrix}
            \tfrac{1}{2}(A+D+R)(R + (A - D)) \;\; \| \;\;  B(A + D + R) \\
            C(A + D + R) \;\; \| \;\; \tfrac{1}{2}((A+D + R)(R - (A - D))) 
        \end{pmatrix}
        \mathbf{x_0} \\ \\
        &=
        \frac{A + D + R}{2R}
        \begin{pmatrix}
            \tfrac{1}{2}(R + (A - D)) \;\; \| \;\;  B \\
            C \;\; \| \;\; \tfrac{1}{2}(R - (A - D)) \\
        \end{pmatrix}
        \mathbf{x_0} \\
        \\
        &= a_1 \mathbf{c_1}.
    \end{align*}
    \end{small}
    Finally, to complete our proof, we still need to argue about the nontriviality of our eigenvalues. We showed earlier that $a_1$ is never trivial, but we claim that $a_2 = 0$ if and only if $\Lambda_1^i$ or $\Lambda_2^j = 0$. If $a_2 = 0$, then by \vref{Equation}{Def:a1-a2}, we must have $R = A + D$. But in \vref{Lemma}{Lemma:LB-on-R}, we showed that
    \begin{align*}
        R = \sqrt{(A+D)^2 - 4\Lambda_1^i\Lambda_2^j},
    \end{align*}
    making it clear that $R = A + D$ if and only $\Lambda_1^i = 0$ or $\Lambda_2^j = 0$. This completes the proof.
\end{proof}

We are now ready to state the main result of the section, which establishes a general formula for the speculator's portfolio. 

\begin{lemma}
    \label{Lemma:general-formula}
    Let $\mathbf{x^k}$ denote the speculator's portfolio at the end of the $k$th round. Then the general formula for $\mathbf{x^k}$, given that $\Lambda_1$ and $\Lambda_2$ are not both 1, is as follows:
\begin{align*}
    \mathbf{x^k}
    = a_1^k \, \mathbf{c_1} + a_2^k \, \mathbf{c_2},
\end{align*}
with $a_1, a_2$ and $\mathbf{c_1}, \mathbf{c_2}$ as defined in \vreftwo{Equations}{Def:a1-a2}{Def:c1-c2}.
\end{lemma}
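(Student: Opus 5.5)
The plan is to realize the round-to-round recursion as repeated application of the single matrix $M$ from \vref{Equation}{Def:M}, and then diagonalize $M$ using \vref{Lemma}{Lemma:eigenvalues}. From the computation preceding \vref{Equation}{Def:M}, the expected portfolio at the end of a round equals $M$ times the portfolio at the start of that round. Since prices are i.i.d., every round has the same expected structure, so $M$ is the same for every round. Writing $\mathbf{x^0} = \mathbf{x_0}$, induction on $k$ therefore gives $\mathbf{x^k} = M^k \mathbf{x_0}$. This uses linearity of the recursion: we work with expected portfolios and treat $i,j$ as fixed per round, as the paper does.

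The next step is to show that the two vectors $\mathbf{c_1}$ and $\mathbf{c_2}$ from \vref{Equation}{Def:c1-c2} sum to $\mathbf{x_0}$. Write $P_1$ and $P_2$ for the matrices appearing in the definitions of $\mathbf{c_1}$ and $\mathbf{c_2}$, so that $\mathbf{c_1} = \frac{1}{R} P_1 \mathbf{x_0}$ and $\mathbf{c_2} = \frac{1}{R} P_2 \mathbf{x_0}$. Adding the diagonal entries gives $\tfrac12(R+(A-D)) + \tfrac12(R-(A-D)) = R$ in each position, while the off-diagonal entries cancel ($B - B = 0$ and $C - C = 0$). Hence $P_1 + P_2 = R I$. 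Since $R>0$ whenever $\Lambda_1^i,\Lambda_2^j$ are not both $1$ (\vref{Corollary}{Cor:A+D<=2}), we may divide by $R$ to get $\mathbf{c_1}+\mathbf{c_2}=\mathbf{x_0}$.

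Then I invoke \vref{Lemma}{Lemma:eigenvalues}, which gives $M\mathbf{c_1} = a_1\mathbf{c_1}$. The paper notes that the analogous identity $M\mathbf{c_2} = a_2\mathbf{c_2}$ follows by the same calculation with $R$ replaced by $-R$. Because that lemma's computation holds for an arbitrary $\mathbf{x_0}$, the $\mathbf{c_i}$ lie in the eigenspaces even when they happen to be zero vectors, so no nontriviality issue arises. Iterating gives $M^k\mathbf{c_\ell} = a_\ell^k \mathbf{c_\ell}$. Combining this with the previous step,
\[
\mathbf{x^k} = M^k(\mathbf{c_1}+\mathbf{c_2}) = a_1^k\mathbf{c_1} + a_2^k\mathbf{c_2}.
\]

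The main obstacle is the hypothesis. The statement assumes that $\Lambda_1$ and $\Lambda_2$ are not both $1$, but \vref{Lemma}{Lemma:eigenvalues} needs $\Lambda_1^i$ and $\Lambda_2^j$ not both $1$. For $i,j\ge 1$ these are equivalent, since $\Lambda^i = 1$ if and only if $\Lambda = 1$ for $\Lambda\in[0,1]$, and rounds by definition contain at least one buy and one sell, so $i,j\geq 1$. I would state this explicitly. I would also point out that $P_1/R$ and $P_2/R$ are exactly the spectral projectors of $M$, which makes the decomposition transparent.
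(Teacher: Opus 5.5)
Your proposal is correct and is essentially the paper's argument. The paper proves the formula by induction on $k$: the base case is $\mathbf{c_1}+\mathbf{c_2}=\mathbf{x_0}$, and the inductive step combines $\mathbf{x^k}=M\mathbf{x^{k-1}}$ with Lemma~\ref{Lemma:eigenvalues}, which is exactly your $M^k(\mathbf{c_1}+\mathbf{c_2})$ computation written out step by step. Your additions fill in details the paper leaves implicit: the explicit check that the two coefficient matrices sum to $RI$ (the paper only asserts the base case), the remark that $M\mathbf{c_\ell}=a_\ell\mathbf{c_\ell}$ holds even when $\mathbf{c_\ell}=\mathbf{0}$, and the reconciliation of the hypothesis on $\Lambda_1,\Lambda_2$ with the condition on $\Lambda_1^i,\Lambda_2^j$.
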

\begin{proof}
    If $\Lambda_1^i$ and $\Lambda_2^j$ are not both 1, then we know from \vref{Lemma}{Lemma:eigenvalues} that $R > 0$ and, therefore, the equations for $\mathbf{c_1}, \mathbf{c_2}$ are well-defined.
    We proceed by induction.
    \begin{itemize}
        \item Base case: When $k = 0$, we confirm that $\mathbf{c_1} + \mathbf{c_2} = \mathbf{x_0}$, and so the base case holds.
        \item Inductive step:  Now suppose our claim holds for $k-1$; i.e., that $\mathbf{x^{k-1}} = a_1^{k-1} \mathbf{c_1} + a_2^{k-1} \mathbf{c_2}$. Then applying matrix $M$ to this vector gives us
        \begin{align*}
           \mathbf{x^k} 
            &=
            M \mathbf{x^{k-1}}  \\
            &= M\big(a_1^{k-1} \mathbf{c_1} + a_2^{k-1} \mathbf{c_2}\big) \\
            &= a_1^{k-1} (M \mathbf{c_1}) + a_2^{k-1} (M \mathbf{c_2}) \\
            &= a_1^{k-1} (a_1\mathbf{c_1}) + a_2^{k-1} (a_2 \mathbf{c_2}) \\
            &= a_1^{k} \mathbf{c_1} + a_2^{k} \mathbf{c_2},
        \end{align*}
        where the penultimate step comes from the application of \vref{Lemma}{Lemma:eigenvalues} to each eigenvector $\mathbf{c_1}, \mathbf{c_2}$.
    \end{itemize}
\end{proof}

% For whatever reason, \vref doesn't seem to play nice with section titles.
% So rather than debug, this just remains \ref
\section{Miscellaneous items from Section \ref{sec:case-study}}

\subsection{Mapping \texorpdfstring{$k$}{k} to \texorpdfstring{$t$}{t} in our example}\label{app:k-to-t}
To convert from the expected number of rounds to the expected number of timesteps, we simply need to calculate the exponents $i$ and $j$ from \vref{Appendix}{app:expected-portfolio}; by the assumption of price independence, $i$ and $j$ are constant across rounds. Since $i$ and $j$ are the expected waiting times until the price exceeds $y_2$, and falls below $y_1$, respectively, we have in \vref{Example}{ex:lambda=0} that $i = 1/(1 - F(y_2)) \approx 10.057$ and $j = 1/(F(y_1)) \approx 3.6954$. By construction, we defined the first round as starting with a price exceeding $y_2$; so in total, the expected time to reserve depletion is $i + k(i + j) \approx 227$ timesteps. Looking ahead, this is very much consistent with our simulations on real and synthetic price data; indeed, there we find that over 10,000 iterations, the average time-to-depletion experimentally is roughly 224 timesteps (with a standard deviation between trials of roughly 43 timesteps). 

    \subsection{Lemma for the expansion of \texorpdfstring{$Y$}{Y} in the normally-distributed case}\label{app:normal-distr-expansion-of-Y}
\begin{lemma}\label{lemma:normal_xf(x)_expansion}
    When $f(x)$ is the probability density function of the normal distribution with mean $\mu$ and variance $\sigma^2$, $\int_a^b x f(x) dx$ can be expanded as $\mu \int_a^b f(x) \, dx - \sigma^2\big[  f(z) \big]_{z=a}^{z=b}$. 
\end{lemma}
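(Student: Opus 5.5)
The plan is to prove the identity by differentiating the normal density and using the fundamental theorem of calculus. Write $f(x) = \frac{1}{\sigma\sqrt{2\pi}}\exp\!\bigl(-\frac{(x-\mu)^2}{2\sigma^2}\bigr)$. The key observation is the differential identity
\begin{align*}
f'(x) = -\frac{x-\mu}{\sigma^2} f(x), \qquad\text{equivalently}\qquad (x-\mu) f(x) = -\sigma^2 f'(x).
\end{align*}
This follows from the chain rule applied to the exponent.

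First, split the integrand as $x f(x) = \mu f(x) + (x-\mu) f(x)$. By linearity,
\begin{align*}
\int_a^b x f(x)\,dx = \mu \int_a^b f(x)\,dx + \int_a^b (x-\mu) f(x)\,dx.
\end{align*}

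Second, substitute the identity $(x-\mu) f(x) = -\sigma^2 f'(x)$ into the second integral. Since $f$ is smooth, the fundamental theorem of calculus gives
\begin{align*}
\int_a^b (x-\mu) f(x)\,dx = -\sigma^2 \int_a^b f'(x)\,dx = -\sigma^2\big[f(z)\big]_{z=a}^{z=b}.
\end{align*}
Combining the two steps yields the stated expansion.

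The only point needing care is the case where $a = -\infty$ or $b = \infty$, which is exactly how the lemma is used to expand $Y$ in Section~\ref{sec:unpacking-stability}. There I would take limits and use that $f(z)\to 0$ as $|z|\to\infty$. Absolute integrability of $x f(x)$ follows from Gaussian tails, so the improper integrals converge and the boundary term at infinity simply vanishes.

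As a sanity check of how the lemma is applied, take $a = y_2$ and $b = \infty$. Dividing by $1-F(y_2)$ gives $\mathbb{E}[p\mid p\ge y_2] = \mu + \sigma^2 \frac{f(y_2)}{1-F(y_2)}$. The analogous computation with $a=-\infty$, $b=y_1$ gives $\mathbb{E}[p\mid p\le y_1] = \mu - \sigma^2\frac{f(y_1)}{F(y_1)}$, which matches the displayed formula for $Y$.

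I expect no real obstacle. The only mild subtlety is the justification of the improper-integral limits.
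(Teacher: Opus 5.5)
Your proof is correct and is essentially the paper's argument: the paper evaluates $\frac{1}{\sqrt{2\pi\sigma^2}}\int_a^b 2(x-\mu)e^{-(x-\mu)^2/(2\sigma^2)}\,dx$ once via the substitution $y=(x-\mu)^2$ and once by linearity, which amounts to your identity $(x-\mu)f(x) = -\sigma^2 f'(x)$ combined with the split $xf = \mu f + (x-\mu)f$. Your treatment of the improper endpoints $a=-\infty$, $b=\infty$, which is where the lemma is actually used to expand $Y$, is a worthwhile addition that the paper leaves implicit.
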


\begin{proof}
Our goal is to expand the definite integral $\int_a^b xf(x)\,dx$, where $f(x) = \frac{1}{\sqrt{2\pi\sigma^2}} e^{-\frac{(x - \mu)^2}{2\sigma^2}}$, into a more workable form. First, we define intermediate functions $g(x) := (x - \mu)^2$ and $h(y) := e^{-\frac{y}{2\sigma^2}}$. By the Fundamental Theorem of Calculus,
\begin{align*}
    \frac{1}{\sqrt{2\pi\sigma^2}} \int_a^b g'(x) h(g(x)) \, dx &= \frac{1}{\sqrt{2\pi\sigma^2}} \int_{g(a)}^{g(b)} h(y) \, dy \\
    &= \frac{1}{\sqrt{2\pi\sigma^2}} \int_{(a - \mu)^2}^{(b - \mu)^2} e^{-\frac{y}{2\sigma^2}} \, dy \\
    &= \frac{1}{\sqrt{2\pi\sigma^2}} \bigg[ -2\sigma^2 e^{-\frac{y}{2\sigma^2}} \bigg]_{(a - \mu)^2}^{(b - \mu)^2} \\
    &=  -2\sigma^2\big[f(z) \big]_{z=a}^{z=b}
\end{align*}
On the other hand, we can also expand the integral $\int_a^b g'(x) f(g(x)) \, dx$ using the linearity property of integration:
\begin{align*}
    \frac{1}{\sqrt{2\pi\sigma^2}} \int_a^b g'(x) h(g(x)) \, dx &= \frac{1}{\sqrt{2\pi\sigma^2}} \int_a^b 2(x - \mu) \, e^{-\frac{(x - \mu)^2}{2\sigma^2}} \, dx \\
    &= \frac{2}{\sqrt{2\pi\sigma^2}} \bigg( \int_a^b x e^{-\frac{(x - \mu)^2}{2\sigma^2}} \, dx - \mu \int_a^b e^{-\frac{(x - \mu)^2}{2\sigma^2}} \, dx \bigg) \\
    &= 2 \bigg( \int_a^b xf(x) \, dx - \mu \int_a^b f(x) \, dx \bigg)
\end{align*}
Combining both equations gives
$$2 \bigg( \int_a^b xf(x) \, dx - \mu \int_a^b f(x) \, dx \bigg) =  -2\sigma^2\big[f(z) \big]_{z=a}^{z=b},$$
or, equivalently, 
$$ \int_a^b xf(x) \, dx = \mu \int_a^b f(x) \, dx - \sigma^2\big[  f(z) \big]_{z=a}^{z=b}.$$
\end{proof}

\subsection{Reserve depletion results}\label{app:speculator-reserve-depletion-results}

\begin{lemma}\label{Lemma:bounding-a1-a2} [Bounding $a_1, a_2,$ and $\mathbf{c_1}$]
    If $\Lambda_1, \Lambda_2 \in [0, 1]$ and $Y \geq 1$, then the eigenvalues of $M$ satisfy the following properties:
        \begin{enumerate}[(1)]
            \item\label{Prop:a1-geq-1}  $a_1 \geq 1$, with equality holding if and only if any of $Y, \Lambda_1^i, \Lambda_2^j$ equals $1$. \medskip
            \item\label{Prop:a2-leq-1} $a_2 \leq 1$, with equality holding if and only if $\Lambda_1^i = \Lambda_2^j = 1$.\medskip
            \item\label{Prop:a2-geq-(-1)} If $a_2 < -1$, then $D > A$. \medskip
    
            Moreover, \medskip

            \item \label{Prop:c1>0} If $\Lambda_1^i$ and $\Lambda_2^j$ are not both 1, then $\mathbf{c_1}$ is positive for any nonnegative, nontrivial initial state vector $(m_0, \, n_0)$.
        \end{enumerate}
    \end{lemma}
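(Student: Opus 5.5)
Throughout, I will use the identities already established in the proof of Lemma~\ref{Lemma:LB-on-R}: writing $\ell_1 := \Lambda_1^i$, $\ell_2 := \Lambda_2^j$ and $S := A+D = \ell_1+\ell_2+(1-\ell_1)(1-\ell_2)Y$, we have $R = \sqrt{S^2-4\ell_1\ell_2}$ and $R \geq |S-2|$, with equality exactly when $Y=1$ or $\ell_1=1$ or $\ell_2=1$. Since $a_{1,2} = \tfrac12(S\pm R)$, each eigenvalue claim becomes an inequality between $R$ and a linear function of $S$.

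For (1), $a_1 \geq 1$ is equivalent to $R \geq 2-S$. This follows at once from $R \geq |S-2| \geq 2-S$. For the equality case, $a_1=1$ forces $R=|S-2|$ together with $S\le 2$. By Lemma~\ref{Lemma:LB-on-R}, the first condition holds iff one of $Y,\ell_1,\ell_2$ equals $1$, and Corollary~\ref{Cor:A+D<=2} then supplies $S\le 2$. Conversely, if one of those equals $1$, then $R=|S-2|=2-S$ by the same corollary, so $a_1=1$.

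For (2), $a_2\le 1$ is equivalent to $R \geq S-2$, which again follows from $R\ge |S-2|$. Equality $a_2=1$ requires $R=S-2\ge 0$, so $R=|S-2|$ and hence $S\le2$ by Corollary~\ref{Cor:A+D<=2}. Together these force $S=2$ and $R=0$, which by the corollary happens iff $\ell_1=\ell_2=1$. The converse direction is immediate: then $S=2$ and $R=0$, so $a_2=1$.

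For (3), note first that $a_2<-1$ means $S+2<R$. Squaring (legitimate, since $S\ge0$) gives $S^2+4S+4 < S^2-4\ell_1\ell_2$, which is impossible because every term involved is nonnegative. So I expect $a_2\ge -1$ always, and (3) holds vacuously. A more direct route is to write $a_2 = (AD-BC)/a_1 = \ell_1\ell_2/a_1 \ge 0$, using $BC = AD-\ell_1\ell_2$ from equation~(\ref{Eqn:BC}). I would present the vacuous argument and mention this positivity as the cleaner reason.

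For (4), $\mathbf{c_1}$ is $\tfrac1R$ times a $2\times2$ matrix applied to $\mathbf{x_0}$, and $R>0$ by Lemma~\ref{Lemma:eigenvalues}. The off-diagonal entries $B$ and $C$ are nonnegative, since $\ell_1,\ell_2\in[0,1]$ and the conditional expectations of the positive price are positive. The diagonal entries are $\tfrac12(R\pm(A-D))$, which are nonnegative provided $R\ge|A-D|$. This in turn follows from $R^2=(A-D)^2+4BC$ with $BC\ge0$, and the inequality is strict when $BC>0$. So each component of $\mathbf{c_1}$ is a nonnegative combination of $m_0,n_0$.

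The main obstacle is upgrading this to strict positivity. This needs a case analysis on whether $B$ or $C$ vanishes (i.e.\ $\ell_2=0$, $\ell_1=1$, or $\ell_2=1$) and on which coordinate of $\mathbf{x_0}$ is nonzero. When $BC>0$, both diagonal entries are strictly positive, so any nontrivial nonnegative $\mathbf{x_0}$ yields a strictly positive vector. In the degenerate cases, some entries are zero, and I would check each subcase separately, interpreting ``positive'' as nonnegative and nonzero if a strict reading fails. For that weaker reading it suffices to note that the matrix has trace $R>0$ and nonnegative entries, so it cannot annihilate every nonzero nonnegative vector unless a column vanishes, which I would rule out case by case.
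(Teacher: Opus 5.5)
Your proposal is correct. Parts (1) and (2) match the paper's argument: both reduce to $R\ge|A+D-2|$ from Lemma~\ref{Lemma:LB-on-R} together with Corollary~\ref{Cor:A+D<=2}. Parts (3) and (4) take genuinely different routes.

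For (3), the paper argues by contraposition. Assuming $D\le A$, it bounds $BC\le AD\le A^2$, hence $R\le 3A-D$ and $a_2\ge D-A\ge -1$. Your identity $a_2=\det M/a_1=\Lambda_1^i\Lambda_2^j/a_1\ge 0$ is shorter and stronger. It gives $a_2\in[0,1]$, so (3) is vacuous. It also shows that Case~2 ($|a_2|>1$) in the proof of the subsequent theorem can never arise.

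For (4), the paper uses $R\ge|A+D-2|$ to get the single bound $\tfrac12(R-(A-D))\ge 1-A>0$ when $\Lambda_2^j<1$. So the \emph{second} component of $\mathbf{c_1}$ is strictly positive, which is exactly the fact $c_1>0$ used later. The paper treats $\Lambda_2^j=1$ separately through the first component, and leaves nonnegativity of the other component implicit. Your bound $R\ge|A-D|$ (from $BC\ge 0$) makes all four entries nonnegative. That fills the paper's omission and gives strict positivity of both components when $BC>0$.

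In the degenerate cases, the trace remark is not what does the work; the zero-column check is, and it is short:
\begin{itemize}
\item The first column vanishes only if $C=0$, i.e.\ $\Lambda_2^j=1$, and then $R+(A-D)=2(1-\Lambda_1^i)>0$.
\item The second column vanishes only if $B=0$. Either $\Lambda_2^j=0$, where $R-(A-D)=2D\ge 2$, or $\Lambda_1^i=1$, where $R-(A-D)=2(1-\Lambda_2^j)>0$.
\end{itemize}
In both $B=0$ cases, this check also recovers the paper's downstream fact that the second component is strictly positive when $\Lambda_2^j<1$.

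Your suspicion about the strict reading is right. In each degenerate case one component of $\mathbf{c_1}$ is exactly $0$, and the paper itself concludes only ``nonnegative and nonzero.''
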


\begin{proof}%[Proof of Lemma~\ref{Lemma:bounding-a1-a2}]
    Properties~\ref{Prop:a1-geq-1} and \ref{Prop:a2-leq-1} follow directly from the definitions of $a_1, a_2$ in \vref{Equation}{Def:a1-a2} and \vref{Lemma}{Lemma:LB-on-R}, which states that $R \geq |A + D - 2|$ with equality holding if and only if $Y = 1$ or $\Lambda_1^i = 1$ or $\Lambda_2^j$ = 1. Together, these two statements imply that

    \begin{align*}
        a_1 = \tfrac12 (A + D + R) \geq  \tfrac12 \big(A + D + |A + D - 2| \, \big) = 
        \begin{cases}
            A + D - 1 > 1, & A + D > 2, \\
            \tfrac12 (A + D) = 1, & A + D = 2, \\
            1, & A + D < 2,
        \end{cases}
    \end{align*}
    and 
    \begin{align*}
        a_2 = \tfrac12 (A + D - R) \leq  \tfrac12 \big(A + D - |A + D - 2| \, \big) = 
        \begin{cases}
            1, & A + D > 2, \\
            \tfrac12 (A + D) = 1, & A + D = 2, \\
            A + D - 1 < 1, & A + D < 2.
        \end{cases}
    \end{align*}
    Now we observe that, for both $a_1$ and $a_2$, the first set of inequalities above become equalities only if $R = |A + D - 2|$, which, as we saw in \vref{Corollary}{Cor:A+D<=2}, implies $A + D \leq 2$. Thus, we need only concern ourselves with the second ($A + D = 2$) and third ($A + D < 2$) cases, for in the first case ($A + D > 2$), the eigenvalues $a_1$ and $a_2$ can never equal 1. \medskip
    
    Consequently, it immediately follows that $a_1 = 1$ if and only if $R = |A + D - 2|$; or, equivalently, if and only if $Y = 1$ or $\Lambda_1^i = 1$ or $\Lambda_2^j = 1$.\medskip
    
    For $a_2$, we note that, because of the strict inequality, the third case and $a_2 = 1$ are mutually exclusive. So the only possible way $a_2$ can equal 1 is if $A + D = 2$; or, equivalently, if and only if $R = 0$; which, as we showed in \vref{Corollary}{Cor:A+D<=2}, can only occur if and only if $\Lambda_1^i = \Lambda_2^j = 1$.
 \medskip
    
    \noindent
    This proves Properties~\ref{Prop:a1-geq-1} and \ref{Prop:a2-leq-1}.
    \medskip
    
    \noindent
    For Property~\ref{Prop:a2-geq-(-1)}, we assume that $D \leq A$. Using this assumption along with \vref{Equation}{Eqn:BC}, we have
    \begin{align*}
        BC = AD - \Lambda_1^i \Lambda_2^j \leq AD \leq A^2,
    \end{align*} 
    giving us
    \begin{align*}
        R &= \sqrt{(A-D)^2 + 4BC} \leq \sqrt{(A-D)^2 + 4A^2} \\ &\leq 
        \sqrt{(|A-D| +2A)^2} = \sqrt{(A-D +2A)^2} = 3A - D.
    \end{align*}
    The definition of $a_2$ implies that
    \begin{align*}
        a_2 = \tfrac12 (A + D - R) \geq \tfrac12(A + D - (3A - D)) = D-A \geq -A \geq -1.
    \end{align*}
    Therefore, if $a_2 < -1$, it must be the case that $D > A$.
    
    Finally, we show Property~\ref{Prop:c1>0}: that if $\Lambda_1^i$ and $\Lambda_2^j$ are not both 1 and $(m_0, n_0)$ is nonnegative and nontrivial, then the eigenvector $\mathbf{c_1}$ is also nontrivial. Suppose $\Lambda_2^j < 1$. Then by the definition of $M$ (\ref{Def:M}), $C > 0$ and $A + D \neq 2$ (since $A + D = 2 \iff R = 0 \iff \Lambda_1^i = \Lambda_2^j = 1$, as we showed in \vref{Corollary}{Cor:A+D<=2}.
    The lower bound on $R$ in \vref{Lemma}{Lemma:LB-on-R} further implies that
    $$Z := \tfrac12 (R - (A - D)) \geq \tfrac12(|A + D - 2| - (A - D)).$$
    If $A + D < 2$, then $Z \geq 1 - A = 1 - \Lambda_2^j > 0$, by the assumption that $\Lambda_2^j < 1$. And similarly, if $A + D > 2$, then $Z \geq D - 1 > (2 - A) - 1 = 1 - A > 0$. (We need not consider the case where $A + D = 2$, as explained above.)
    Consequently, the second component of $\mathbf{c_1}$, $\frac{1}{R}(C \, m_0 + \tfrac12(R - (A - D))\, n_0)$, is  strictly greater than zero for $(m_0, n_0)$ nontrivial whenever $\Lambda_2^j < 1$. \medskip 
    
    Now suppose $\Lambda_2^j = 1$, but $\Lambda_1^i \neq 1$. Then $B > 0$, $C = 0$, and $A - D = 1 - \Lambda_1^i > 0$. This implies
    \begin{align*}
        \tfrac12 (R + (A - D)) &= \tfrac12 (\sqrt{(A - D)^2 + 4BC} + (A - D)) \\
        &= \tfrac12 ( |A - D| + (A - D)) \\
        &= A - D > 0.
    \end{align*}
    Thus, the first component of $\mathbf{c_1}$, $\tfrac{1}{R}(\tfrac12 (R + (A - D)) \, m_0 + B \, n_0)$, is strictly greater than zero for nontrivial $(m_0, \, n_0)$ whenever $\Lambda_2^j = 1$ but $\Lambda_1^i \neq 1$.
    
    Thus, $\Lambda_1^i, \Lambda_2^j$ not both 1 implies that $\mathbf{c_1}$ is nonnegative and nonzero if $(m_0, \, n_0)$ is nonnegative and nonzero. Otherwise, if $\Lambda_1^i = \Lambda_2^j = 1$, then $R =0$ and so $\mathbf{c_1}$ is undefined. 
\end{proof}

    \begin{theorem}\label{Theorem}
        Let $(n_k)_{k \in \mathbb{N}}$ denote the expected sequence of the speculator's backing coin holdings, where $n_k$ represents the amount of backing coins the speculator holds after $k$ rounds. Then, $\lim n_k = +\infty$ if and only if none of $Y, \Lambda_1^i, \Lambda_2^j$ equals $1$ and at least one of $m_0, n_0$ is positive (and neither is negative).
    \end{theorem}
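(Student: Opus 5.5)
The plan is to read the backing-coin holdings $n_k$ off the general formula of Lemma~\ref{Lemma:general-formula}, $\mathbf{x^k} = a_1^k\mathbf{c_1} + a_2^k\mathbf{c_2}$, and to control the growth of each term using the eigenvalue bounds of Lemma~\ref{Lemma:bounding-a1-a2}. Throughout I assume $Y \geq 1$, which holds because $y_1 \leq y_2$ forces $\mathbb{E}[p\mid p\geq y_2] \geq \mathbb{E}[p \mid p \leq y_1]$. The degenerate case $\Lambda_1^i = \Lambda_2^j = 1$ is handled separately and directly: then $M$ is the identity, so $n_k = n_0$ for all $k$ and $n_k$ does not diverge. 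This is consistent with the statement, since $\Lambda_1^i = 1$ is one of the excluded conditions. If $(m_0,n_0) = (0,0)$, then $M^k\mathbf{x_0} = 0$ trivially, so the condition on the initial state is needed.

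For the ``if'' direction, suppose none of $Y, \Lambda_1^i, \Lambda_2^j$ equals $1$ and $(m_0,n_0)$ is nonnegative and nonzero. By Lemma~\ref{Lemma:bounding-a1-a2}(1), $a_1 > 1$. By (2), $a_2 < 1$. Since $M$ has nonnegative entries, $a_1$ is its Perron root, and because $a_1+a_2 = A+D \geq 0$, we get $|a_2| \leq a_1$. The equality $a_2 = -a_1$ would require $A+D=0$, which is impossible here. Hence $|a_2| < a_1$, and $a_1^k\mathbf{c_1}$ dominates. By Lemma~\ref{Lemma:bounding-a1-a2}(4), $\mathbf{c_1}$ is nonnegative and nonzero. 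The main point to check is that its \emph{second} component, the backing-coin component, is strictly positive. The proof of (4) gives this when $\Lambda_2^j<1$, and that case is ours because $\Lambda_2^j \neq 1$ by hypothesis. Then $n_k = a_1^k c_{1,2} + a_2^k c_{2,2} = a_1^k\bigl(c_{1,2} + (a_2/a_1)^k c_{2,2}\bigr) \to \infty$, since $(a_2/a_1)^k \to 0$.

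For the ``only if'' direction I argue by contrapositive. If $(m_0,n_0) = 0$, then $n_k = 0$. If a component were negative, the initial state would be inadmissible, since portfolios lie in $\mathbb{R}^2_{\geq 0}$; I would simply note that the statement presupposes admissibility. Now suppose one of $Y,\Lambda_1^i,\Lambda_2^j$ equals $1$. The case where both $\Lambda$'s equal $1$ was handled above. Otherwise $R>0$, and Lemma~\ref{Lemma:bounding-a1-a2}(1) gives $a_1 = 1$, while (2) gives $a_2 < 1$. It remains to rule out $a_2 \leq -1$, which would make the second term oscillate with non-decaying or growing magnitude. For this I use $a_1 a_2 = \det M = \Lambda_1^i\Lambda_2^j \in [0,1]$, computed from \eqref{Eqn:BC}; with $a_1 = 1$ this gives $a_2 = \Lambda_1^i\Lambda_2^j \in [0,1)$. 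Then $n_k \to c_{1,2}$, which is finite, so $n_k \not\to\infty$.

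The step I expect to be the main obstacle is the positivity of the backing-coin component of $\mathbf{c_1}$ together with the dominance $|a_2|<a_1$. Property (3) of Lemma~\ref{Lemma:bounding-a1-a2} only partially addresses the size of $a_2$. If I must avoid citing Perron--Frobenius, I would instead use the determinant identity, $a_1a_2 = \Lambda_1^i\Lambda_2^j \geq 0$: it gives $a_2 \geq 0$, and since $a_2 \leq a_1$ by definition, $0 \leq a_2 < a_1$ when $R > 0$. This settles dominance cleanly.
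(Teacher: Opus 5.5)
Your proof is correct. It shares the paper's skeleton: the formula $n_k=c_1a_1^k+c_2a_2^k$ from Lemma~\ref{Lemma:general-formula}, $a_1>1$ from Property (1) of Lemma~\ref{Lemma:bounding-a1-a2}, and positivity of the backing-coin coordinate of $\mathbf{c_1}$. Where you differ is in how you control $a_2$.

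The paper splits into $|a_2|\le 1$ and $a_2<-1$. In the second case it uses Property (3) ($D>A$) to get $c_1>|c_2|$. In the converse it only concludes boundedness, from $|a_2|\le a_1=1$.

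You instead use $a_1a_2=\det M=AD-BC=\Lambda_1^i\Lambda_2^j\ge 0$. This is exactly the identity $BC=AD-\Lambda_1^i\Lambda_2^j$ from the proof of Lemma~\ref{Lemma:LB-on-R}; equivalently, the buy and sell matrices have determinants $\Lambda_1$ and $\Lambda_2$. It gives $a_2\ge 0$ outright, with three consequences: the paper's second case is vacuous, Property (3) is never needed, and in the converse you get the sharper $a_2=\Lambda_1^i\Lambda_2^j\in[0,1)$, so $n_k$ actually converges to $c_{1,2}$.

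Your write-up also covers points the paper's proof passes over:
\begin{enumerate}
\item For $\Lambda_1^i=\Lambda_2^j=1$ one has $R=0$, so the eigen-formula and the claim $a_1=1$ are undefined there. You handle this case directly via $M=I$.
\item You treat the zero initial state, which the paper's converse ignores.
\item You make explicit that the relevant coordinate of $\mathbf{c_1}$ is the second one. The proof of Property (4) secures this precisely when $\Lambda_2^j<1$, which the hypothesis guarantees.
\end{enumerate}
Reading ``neither is negative'' as a standing admissibility assumption is also the right call. For the bare linear recursion, $m_0>0$ with a slightly negative $n_0$ still gives $c_{1,2}>0$ and hence divergence.
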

    \begin{proof}
        Define
        \begin{align*}
            c_1 &:= \mathbf{c_1} \cdot \mathbf{e_2} = \tfrac{1}{R}(C \, m_0 + \tfrac12 (R - (A - D)) \, n_0), \\
            c_2 &:= \mathbf{c_2} \cdot \mathbf{e_2} = \tfrac{1}{R}(-C \, m_0 + \tfrac12(R + (A - D)) \, n_0),
        \end{align*}
        where $\mathbf{e_2}$ denotes the unit vector $\begin{pmatrix} 0 \\ 1 \end{pmatrix}$. Then by \vref{Lemma}{Lemma:general-formula}, we can write
        \begin{align*}
            n_k = c_1a_1^k+ c_2a_2^k.
        \end{align*}
        We now argue that, when none of $Y, \Lambda_1^i, \Lambda_2^j$ equals 1, we can always construct a sequence $(s_k)_{k \in \mathbb{N}}$ with the property that $s_k \leq n_k$ for all $k \in \mathbb{N}$, and whose limit tends to $+\infty$. We split our analysis into two cases: $|a_2| \leq 1$ and $|a_2| > 1$.
        \begin{itemize}
            \item \textbf{Case 1 ($|a_2| \leq 1$):} Define $s_k := c_1a_1^k - |c_2|$. Then we have
            \begin{align*}
                n_k &= c_1a_1^k + c_2a_2^k \geq c_1a_1^k - |c_2||a_2|^k \geq c_1a_1^k - |c_2| = s_k.
            \end{align*}
            If $Y, \Lambda_1^i, \Lambda_2^j \neq 1$, then Property~\ref{Prop:a1-geq-1} of \vref{Lemma}{Lemma:bounding-a1-a2} implies that $|a_1| > 1$. Moreover, assuming at least one of $m_0, n_0$ is positive (and neither is negative), then $c_1 > 0$ by Property~\ref{Prop:c1>0} of \vref{Lemma}{Lemma:bounding-a1-a2}. Thus, $\lim s_k = + \infty$; and because $s_k \leq n_k$ for all $k \in \mathbb{N}$, $\lim n_k = + \infty$, as well.\bigskip 
            
            \item \textbf{Case 2 ($|a_2| > 1$):} Property~\ref{Prop:a2-leq-1} of \vref{Lemma}{Lemma:bounding-a1-a2} implies that $a_2$ cannot be greater than 1, and so it must be the case that $a_2 < -1$. Define the sequence $(r_k)_{k \in \mathbb{N}}$ so that $r_k := (c_1 - |c_2|) a_1^N$.
            It follows immediately from \vref{Equations}{Def:a1-a2} that $|a_1| > |a_2|$ whenever $R > 0$ (i.e., when $\Lambda_1^i$, $\Lambda_2^j$ are not both 1, as shown in \vref{Corollary}{Cor:A+D<=2}). Therefore, we have
            \begin{align*}
                n_k &= c_1a_1^k + c_2a_2^k \geq c_1 a_1^N - |c_2| |a_2|^N \geq c_1 a_1^N - |c_2| a_1^N = (c_1 - |c_2|) a_1^N = r_k.
            \end{align*}
            Thus, the sequence $(n_k)_{k \in \mathbb{N}}$ is lower-bounded by $(r_k)_{k \in \mathbb{N}}$ when $a_2 < -1$. To complete the proof, we also need to show that $\lim r_k = +\infty$ or equivalently, that $c_1 > |c_2|$ when $a_2 < -1$. Using the formulas for $c_1, c_2$ above and simplifying, we observe that
            \begin{align*}
                c_1 - |c_2| =
                \begin{cases}
                    \frac{1}{R} (2Cm_0 + (D-A)n_0), & c_2 \geq 0 \\
                    n_0, & c_2 \leq 0.
                \end{cases}
            \end{align*}
        \end{itemize}
        Property~\ref{Prop:a2-geq-(-1)} of \vref{Lemma}{Lemma:bounding-a1-a2} implies that $D - A > 0$; and so if $m_0$ and $n_0$ are nonnegative and nontrivial, then $c_1 - |c_2| > 0$. Thus, $(r_k)_{k \in \mathbb{N}}$ (and, by extension, $(n_k)_{k \in \mathbb{N}}$) tends to $+ \infty$.\medskip 
        
        Above, we argued that if none of $Y, \Lambda_1^i, \Lambda_2^j$ equals 1, then $\lim n_k = + \infty$. Now we show that if any of these conditions fails, then $(n_k)_{k \in \mathbb{N}}$ is bounded. By Property~\ref{Prop:a1-geq-1} of \vref{Lemma}{Lemma:bounding-a1-a2}, if any of $Y, \Lambda_1^i, \Lambda_2^j$ equals 1, then $a_1 = 1$. Thus, the term $c_1 a_1^k = c_1$ for all $k \in \mathbb{N}$. And since $|a_1| \geq |a_2|$, we must have $|a_2| \leq 1$, and so $- |c_2| \leq c_2 a_2^k \leq |c_2|$. Thus, $n_k \in \big[c_1 - |c_2|, \, c_1 + |c_2| \big]$ for all $k \in \mathbb{N}$.
    \end{proof}

\subsection{Numerical approximations and simulations}\label{app:simulations}

\subsubsection{Numerical approximations}
In our numerical approximations, we simply use a standard root-finding algorithm to numerically compute the expected value $k$ of reserve depletion, given a speculator with waiting interval $[y_1, y_2]$. The speculator's waiting interval is computed according to the formulas outlined in \vref{Appendix}{app:speculator}; in particular, the speculator comes equipped with patience and risk aversion parameters\footnote{For the sake of simplicity, we assume $\Lambda_1 = \Lambda_2$; as we will see later on, this assumption makes little difference in the overall behavior of reserve depletion.} $\delta$ and $\Lambda$, and computes its optimal waiting interval, assuming prices are drawn independently at random from the normal distribution $\mathcal{D}$; we assume that this distribution is known to the speculator beforehand. We then use our root-finding algorithm on the equation in \vref{Lemma}{Lemma:general-formula} (\vref{Appendix}{app:expected-portfolio}) to solve for the expected number of rounds until reserve depletion.

\begin{figure}[h!]
    \centering
    \includegraphics[width=0.75\linewidth]{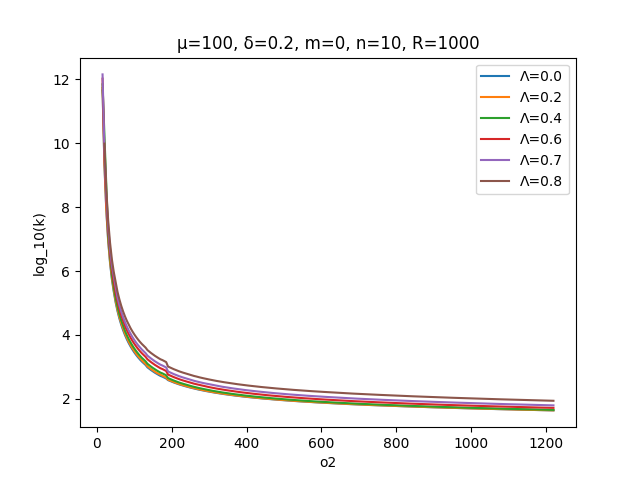}
    \caption{The log of the expected time to depletion for prices drawn independently at random from a normal distribution with $\mu = 100$ and various values of $\sigma^2$, given a speculator with $\delta=0.2$, $\Lambda \in [0, 0.2, 0.4, 0.6, 0.8]$, $n_0 = 10$, and initial reserves $R_0 = 1000$.}
    \label{fig:different-lambdas}
\end{figure}

Below, we describe how each of the speculator's input parameters impacts the expected time to reserve depletion, based on \vref{Lemma}{Lemma:general-formula} in \vref{Appendix}{app:expected-portfolio} and the results of our numerical approximations.
\begin{itemize}
    \item \textbf{The volatility of the backing coin:} The higher the volatility of a price sequence, the faster the rate of reserve depletion. This is clearest to see from the expansion of $Y$ for the independent and normally-distributed case. The same pattern can be seen in our numerical results over a normal distribution in \vref{Figure}{fig:different-deltas}. We note in the figure that the time to depletion increases \textit{dramatically} for small values of $\sigma^2$ -- note that the scale of the $y$-axis is logarithmic. In the next section, we show that the same behavior arises even when we relax the assumption of independent price draws.
    \item \textbf{The speculator's patience} $\delta$\textbf{:} Generally speaking, the lower the value of $\delta$, the larger the speculator's ``waiting interval,'' since more patient speculators are willing to wait longer for ``more exceptional'' prices. The end result is that more patient speculators generally deplete a system's reserves more quickly, as we can see in \vref{Figure}{fig:different-deltas}. Moreover, we see in the figure that the size of $\delta$ has quite a large impact on the rate of depletion, with large values of $\delta$ corresponding to extremely long depletion times. In fact, in our simulations, we could not even compute the time to depletion for $\delta > 0.7$ using standard Python packages without running out of memory in the process.
    \item \textbf{The speculator's risk aversion} $\Lambda$\textbf{:} We proved in \vref{Appendix}{app:speculator-reserve-depletion-results} that the two conditions which prevent reserve depletion are $Y = 1$ or $\Lambda = 1$. In the case of $\Lambda=1$, the speculator is so risk averse that it never opts to buy or sell any stablecoins, and so of course the system's reserves are never depleted -- in fact, they're never changed at all. Unsurprisingly, for $\Lambda < 1$, we find that higher values of $\Lambda$ correspond with longer times-to-depletion. But interestingly, our numerical results did not show a very notable difference in depletion rates for various values of $\Lambda$ (see \vref{Figure}{fig:different-lambdas}), which suggests that assuming $\Lambda = 0$ might not have a great impact on subsequent results. This is convenient because much of the mathematical complexity of the proofs in the appendix stems from nonzero values of $\Lambda$.
    \item \textbf{The ratio} $\tfrac{R_0}{n_0}$\textbf{:} As expected, the higher this ratio (i.e., the more initial funds the system has relative to the speculator), the slower the rate of reserve depletion. This is intuitively obvious, and it is reflected in the formula for $k$ when $\Lambda = 0$. We find that the same pattern holds for general $\Lambda$.
\end{itemize}

\subsubsection{Simulations}
In order to adapt to the new pattern of prices in which the mean and standard deviation are constantly changing, we coded a version of the speculator with a dynamic waiting interval. In particular, we altered the speculator's method of computing its waiting interval $[y_1, \, y_2]$ from the optimization method outlined in \vref{Appendix}{app:speculator} to a more time-sensitive approach. At each time $t$, the speculator has an estimate $\mu_t, \sigma_t$ of the mean and standard deviation of prices over the last few timesteps. The speculator then defines its waiting interval at time $t$ to be $[\mu_t - c \cdot \sigma_t, \mu_t + c\cdot \sigma_t]$ for some constant $c$.\footnote{We found in practice that $c\approx 3.5$ yielded good results.}

\begin{figure}[t]
  \begin{subfigure}{0.48\textwidth}
    \includegraphics[width=\linewidth]{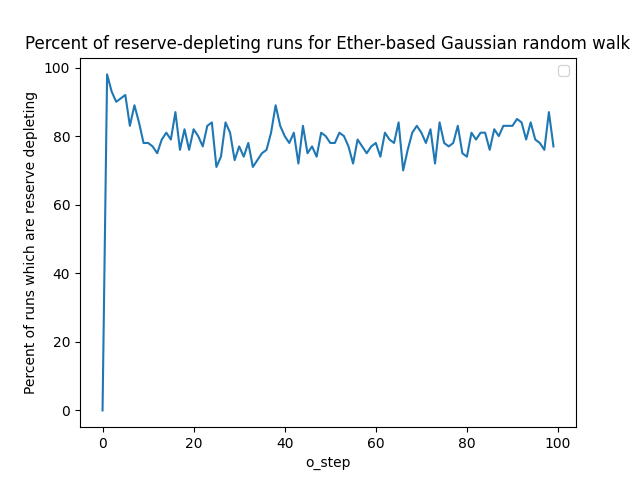}
    \caption{ETH: Percent of runs which were reserve-\\depleting in simulations.} \label{fig:ETH_per}
  \end{subfigure}%
  \hspace*{\fill}   % maximize separation between the subfigures
  \begin{subfigure}{0.48\textwidth}
    \includegraphics[width=\linewidth]{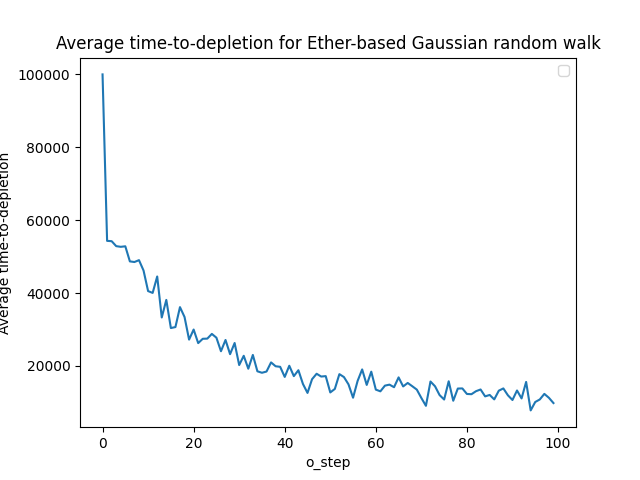}
    \caption{ETH: Average time-to-depletion of the \\reserve-depleting runs.} \label{fig:ETH_ttd}
  \end{subfigure}%
  \hspace*{\fill}   % maximizeseparation between the subfigures
  \\
  \begin{subfigure}{0.48\textwidth}
    \includegraphics[width=\linewidth]{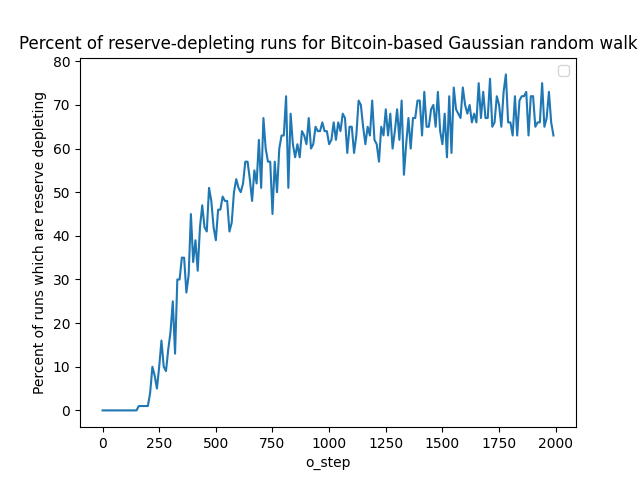}
    \caption{BTC: Percent of runs which were reserve-\\depleting in simulations.} \label{fig:BTC_per}
  \end{subfigure}
  \begin{subfigure}{0.48\textwidth}
    \includegraphics[width=\linewidth]{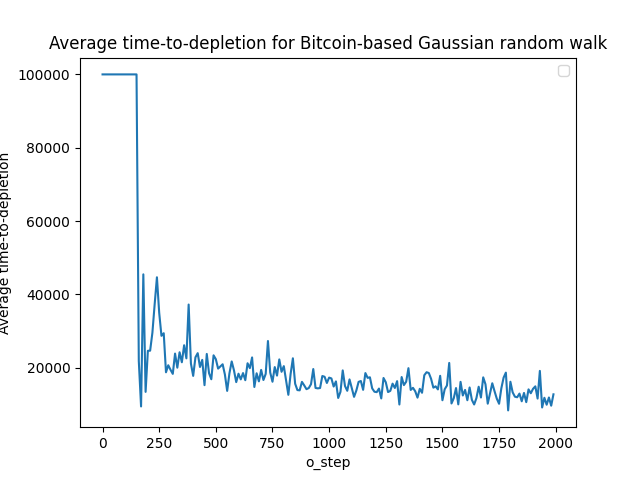}
    \caption{BTC: Average time-to-depletion of the reserve-depleting runs.} \label{fig:BTC_ttd}
  \end{subfigure}
\caption{The results of our Gaussian random walk simulations for both Ethereum and Bitcoin. The mean step-size $\mu_{step}$ and random walk starting point were fixed for each currency throughout, and 100 random walk simulations were run for each value of $\sigma_\text{step}$ for up to 100,000 iterations. The initial reserves contained $R_0 = 1000$ coins, and the speculator was endowed with $n_0 = 100$ coins at the start of each run.} \label{fig:BTC-and-ETH_random_walks}
\end{figure}

\end{document}